\newtheorem{theorem}{Theorem}
\newtheorem{lemma}{Lemma}
\newtheorem{corollary}{Corollary}
\newtheorem{claimx}{Claim}
\theoremstyle{definition}
\newtheorem{definition}{Definition}
\newcommand{\Z}{\mathbb{Z}}
\newcommand{\defproblem}[3]{
  \vspace{2mm}
\noindent\fbox{
  \begin{minipage}{0.95\textwidth}
  #1 \\
  {\bf{Input:}} #2  \\
  {\bf{Question:}} #3
  \end{minipage}
  }
  \vspace{2mm}
}
\newcommand{\deftproblem}[3]{
  \vspace{2mm}
\noindent\fbox{
  \begin{minipage}{0.95\textwidth}
  #1 \\
  {\bf{Input:}} #2  \\
  {\bf{Task:}} #3
  \end{minipage}
  }% 
}
\author{Ond\v rej Such\' y\\[\smallskipamount]
Department of Theoretical Computer Science,\\
Faculty of Information Technology,\\
Czech Technical University in Prague,
Prague, Czech Republic\\
\url{ondrej.suchy@fit.cvut.cz}}
\title{On Directed Steiner Trees with Multiple Roots}
\date{\today}
\begin{document}
\maketitle
\begin{abstract}
We introduce a new Steiner-type problem for directed graphs named \textsc{$q$-Root Steiner Tree}. Here one is given a directed graph $G=(V,A)$ and two subsets of its vertices, $R$ of size $q$ and $T$, and the task is to find a minimum size subgraph of $G$ that contains a path from each vertex of $R$ to each vertex of $T$. The special case of this problem with $q=1$ is the well known \textsc{Directed Steiner Tree} problem, while the special case with $T=R$ is the \textsc{Strongly Connected Steiner Subgraph} problem.

We first show that the problem is W[1]-hard with respect to $|T|$ for any $q \ge 2$. Then we restrict ourselves to instances with $R \subseteq T$. Generalizing the methods of Feldman and Ruhl [SIAM J. Comput. 2006], we present an algorithm for this restriction with running time $O(2^{2q+4|T|}\cdot n^{2q+O(1)})$, i.e., this restriction is FPT with respect to $|T|$ for any constant $q$.
We further show that we can, without significantly affecting the achievable running time, loosen the restriction to only requiring that in the solution there are a vertex $v$ and a path from each vertex of $R$ to $v$ and from $v$ to each vertex of~$T$.

Finally, we use the methods of Chitnis et al. [SODA 2014] to show that the restricted version can be solved in planar graphs in $O(2^{O(q \log q+|T|\log q)}\cdot n^{O(\sqrt{q})})$ time. 
\end{abstract}

\section{Introduction}
Steiner type problems are one of the most fundamental problems in the network design. In general words the task is to connect a given set of points at the minimum cost. The study of these problems in graphs was initiated independently by Hakimi~\cite{Hakimi71} and Levin~\cite{Levin71}. In the classic \textsc{Steiner Tree} one is given a (weighted) undirected graph $G=(V,E)$ and a set $T$ of its vertices (\emph{terminals}) and the task is to find a minimum cost connected subgraph containing all the terminals. 

In directed graphs, the notion of connectivity is more complicated. The notion which turns out to be the closest to the undirected \textsc{Steiner Tree} is that of \textsc{Directed Steiner Tree (DST)}, where one is given a (weighted) directed graph $G=(V,A)$, a set $T$ of terminals, and additionally a root vertex $r$ and the task is to find a minimum weight subgraph that provides a path from $r$ to each of~$T$. Another natural option is, given a digraph $G=(V,A)$ and a set $T$ of terminals, to search for a minimum cost subgraph that provides a path between each pair of terminals in both directions. This is problem is called \textsc{Strongly Connected Steiner Subgraph (SCSS)}. The most general problem allows to prescribe the demanded connection between the terminals. Namely, in \textsc{Directed Steiner Network (DSN)} one is given a digraph $G=(V,A)$ and a set of $q$ pairs of vertices $\{(s_1,t_1), \ldots, (s_q,t_q)\}$ and is asked to find a minimum weight subgraph $H$ of $G$ that contains a directed path from $s_i$ to $t_i$ for every $i$.

Obviously, DSN is a generalization of both DST and SCSS. In this paper we consider a special case of DSN, which is still a very natural generalization of both DST and SCSS, namely the following problem:\\
\defproblem{\textsc{$q$-Root Steiner Tree ($q$-RST)}}
{A directed graph $G=(V,A)$, two subsets of its vertices $R,T \subseteq V$ with $|R| = q$, and a positive integer $k$.}
{Is there a set $S \subseteq V$ of size at most $k$ such that in $G[R \cup S \cup T]$ there is a directed path from $r$ to $t$ for every $r \in R $ and every $t \in T$?}

If $q=1$, then $q$-RST problem is equal to (unweighted) DST. On the other hand, if we let $T=R$, then the problem is equivalent to (unweighted) SCSS on the terminal set $T$. We study the problem from a multivariate perspective, examining the influence of various parameters on the complexity of the problem. We focus on the following parameters: number of roots $q = |R|$, number of terminals $|T|$, and to a limited extent also to the budget $k$.
Thorough the paper we denote $n=|V|$ and $m=|A|$. Before we present our results, let us summarize what is known about the problems.
\paragraph*{Known Results}

\textsc{Steiner Tree} is NP-hard~\cite{GareyJ79} and remains so even in very restricted planar cases~\cite{GareyJ77}. As the NP-hardness can be easily transferred also to DST and SCSS, the problems were studied from approximation perspective. However, in general terms, the problems are also hard to approximate. The best known approximation factor for DST and SCSS is~$O(|T|^{\epsilon})$ for any fixed~$\epsilon>0$~\cite{CCCD+99}. On the other hand, the problems cannot be approximated to within a factor of $O(\log^{2-\epsilon} n)$ for any $\epsilon>0$, unless NP has quasi-polynomial time Las Vegas algorithms~\cite{HalperinKKSW07}. For the most general DSN problem the best known ratio is $n^{2/3+\epsilon}$ for any $\epsilon>0$ and the problem cannot be approximated to within $O(2^{\log^{1-\epsilon} n})$ for any $\epsilon>0$, unless NP has quasi-polynomial time algorithms~\cite{Berman13}. We refer to surveys, e.g., \cite{KN07}, for more information on the numerous polynomial-time approximation results for Steiner-type problems.

From the perspective of parameterized algorithms~\cite{CyganFKLMPPS15,DowneyF13} the problems are mostly studied with respect to the number of terminals. It follows from the classical result of Dreyfus and Wagner~\cite{DW72} (independently found by Levin~\cite{Levin71}), that \textsc{Steiner Tree} and also DST can be solved in $O(3^{|T|}\cdot n^{O(1)})$ time. The algorithm was subsequently improved~~\cite{EricksonMV87,FKMRRW07,BHKK07} with the latest algorithm of Nederlof~\cite{Nederlof13} achieving $O(2^{|T|}\cdot n^{O(1)})$ time and polynomial space complexity.

For the SCSS and DSN with $q$ terminals and $q$ terminal pairs, Feldman and Ruhl~\cite{FeldmanR06} showed that the problems can be solved roughly in $O(n^{2q})$ and $O(n^{4q})$ time, respectively. We cannot expect fixed parameter tractability for these problems, since the problems are W[1]-hard with respect to this parameter~\cite{GuoNS11} (and even with respect to the total size of the sought graph). In fact, unless the Exponential Time Hypothesis (ETH)~\cite{ImpagliazzoP01} fails, SCSS cannot be solved in $f(q)n^{o(q/\log q)}$ time on general graphs and DSN cannot be solved in $f(q)n^{o(q)}$ time even on planar DAGs~\cite{ChitnisHM14}. Chitnis et al.~\cite{ChitnisHM14} also showed that on planar graphs SCSS can be solved within $2^{O(q \log q)}n^{O(\sqrt{q})}$ time, but it is still W[1]-hard and cannot be solved within $f(q)n^{o(\sqrt{q})}$ time, unless ETH fails. 

With respect to the less studied parameter ``number of nonterminals in the solution'', representing one possible measure of the solution size, all the problems are on general graphs W[2]-hard by an easy reduction from \textsc{Set Cover} (see, e.g., Guo et al.~\cite{GuoNS11}). On planar graphs, only DST was studied with respect to this parameter, achieving fixed parameter tractability~\cite{JonesLRSS13}.

\paragraph*{Our Contribution}

In this paper our aim is to generalize the positive results for DST and SCSS also to $q$-RST. Unfortunately, as our first result, we show that $q$-RST is still too general to achieve this goal. Namely, we show that for any constant $q \ge 2$ the $q$-RST is W[1]-hard with respect to $|T|$ even on directed acyclic graphs and cannot be solved within $f(|T|)n^{o(|T|/\log |T|)}$ time, unless ETH fails.
In fact the same results hold even if we replace $|T|$ by $(k+|T|)$, the total number of vertices in the resulting subgraph (minus $q$).

Then, we restrict the problem further to its special case by requiring $R \subseteq T$. In fact, for better readability we require the solution to provide a path from each $r \in R$ to each vertex $t \in R\cup T$ and assume $T \cap R = \emptyset$. We call the resulting problem \textsc{$q$-Root Steiner Tree with Pedestal} ($q$-RST-P). Observe that it still generalizes DST as well as SCSS.

We show that we can generalize the algorithm of Feldman and Ruhl~\cite{FeldmanR06} for SCSS to $q$-RST-P, using an algorithm for DST as a subroutine. The running time of our algorithm is $O(2^{2q+4|T|}\cdot n^{2q+O(1)})$, i.e., the problem is FPT with respect to $|T|$ for any constant $q$ and the exponent of the polynomial depends linearly on $q$. The lower bounds for SCSS indicate that this dependency on $q$ is optimal. In fact if $T=\emptyset$, then our algorithm is exactly the algorithm of Feldman and Ruhl, while if $q=1$, the algorithm  boils down to a single call to the DST subroutine.

The algorithm of Feldman and Ruhl is based on a token game, where the tokens trace the path required in the solution. The solution of the instance is then represented by a sequence of moves of the tokens between two specified configurations. We first enrich the game by introducing new tokens that trace the path to vertices of $T$ while using the original tokens to trace paths between the vertices in $R$. We call this game cautious. 

We then show that the solutions can be represented by move sequences with further interesting properties. these allow us to group the moves and reduce the number of intermediate configurations. The resulting game, which we call accelerated, has moves very similar to the original game of Feldman and Ruhl, but each move is now equipped by a subset of vertices of $T$ that is also reached in this move. We use this similarity for further result in our paper.

The crucial property of the problem that allows us to come up with the algorithm is that there is a vertex such that every path required by the solution can be dragged through this specific vertex (allowing the vertices to repeat on the path). To illustrate this, we introduce another variant of the problem \textsc{$q$-Root Steiner Tree with Trunk} ($q$-RST-T), which is the same as $q$-RST, but the solution is further required to contain a vertex which has a path from each vertex in $R$ and to each vertex in $T$.
We show that this problem can be solved in similar running time as $q$-RST-P, namely $O(2^{2q+4|T|}\cdot n^{3q+O(1)})$. Qualitatively similar running time can be also achieved if the special vertex provides all but a constant number of the paths required by the problem. 

We further generalize the result of Chitnis et al.~\cite{ChitnisHM14} giving the improved algorithm for SCSS in planar graphs to obtain an algorithm for $q$-RST-P in planar graphs with running time $O(2^{O(q \log q+|T|\log q)}\cdot n^{O(\sqrt{q})})$.

While the hardness result applies to the decision variant, the algorithms directly apply to the (cardinality) optimization case. Moreover, it is straightforward to generalize them to the case of vertex weights (we might want to use different, more suitable, DST algorithm as a subroutine, based on the actual range of the weights). In order to use arc weights, one just has to subdivide each arc and give the weight of the arc to the newly created vertex. Thus our algorithms also apply to vertex weighted and arc weighted variants of the problems.
Nevertheless, for ease of presentation, we formulate all our results only for the cardinality case.

\paragraph*{Organization of the paper}
In Section~\ref{sec:hardness} we present the hardness result for the unrestricted version of $q$-RST. Section~\ref{sec:pedestal} describes the games and the algorithm for $q$-RST-P. This is generalized to $q$-RST-T in Section~\ref{sec:trunk}. The improved algorithm for $q$-RST-P in planar graphs is contained in Section~\ref{sec:planar}. We conclude the paper with outlook in Section~\ref{sec:concl}. 
%Due to space constraints many proofs and explanations had to be moved to appendix.

\section{Unrestricted case}\label{sec:hardness}
In this section we show that the unrestricted variant of $q$-RST is W[1]-hard with respect to $|T|$ even on directed acyclic graphs. Let us start with the case $q=2$.

\begin{theorem}%\footnote{Proofs of statements marked with ($\star$) were moved to appendix.}
\label{thm:hardness}
2-RST is W[1]-hard with respect to $|T|$ even on directed acyclic graphs. Moreover, there is no algorithm for 2-RST on directed acyclic graphs running in $f(|T|)n^{o\left(\frac{|T|}{\log |T|}\right)}$ time, unless ETH fails.
\end{theorem}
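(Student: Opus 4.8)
The plan is to reduce from \textsc{Grid Tiling}, which Chitnis et al.~\cite{ChitnisHM14} used to obtain analogous lower bounds for SCSS, or equivalently from \textsc{Multicolored Clique} / \textsc{Partitioned Subgraph Isomorphism}. Recall that in \textsc{Grid Tiling} one is given an integer $\kappa$, a set $[\ell]=\{1,\dots,\ell\}$, and for each cell $(i,j)$ of a $\kappa\times\kappa$ grid a set $\mathcal{S}_{i,j}\subseteq[\ell]\times[\ell]$; the task is to pick one pair $s_{i,j}\in\mathcal{S}_{i,j}$ for each cell so that horizontally adjacent choices agree in the first coordinate and vertically adjacent choices agree in the second coordinate. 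This problem is W[1]-hard parameterized by $\kappa$ and has no $f(\kappa)n^{o(\kappa/\log\kappa)}$ algorithm under ETH, and crucially the reduction will produce a number of terminals that is $O(\kappa)$, so the lower bound in $|T|$ follows from the one in $\kappa$.

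The construction uses $q=2$ roots $R=\{r_1,r_2\}$. The key idea is that the two roots must reach a common structure, and making them both reach the same terminals forces a consistency check. I would build a gadget for each cell $(i,j)$ whose ``internal'' vertices encode a choice $(a,b)\in\mathcal{S}_{i,j}$: concretely, a diamond-like DAG where a path from the cell's entry to its exit selects a pair, with the first coordinate $a$ controlling which ``column wire'' is used and the second coordinate $b$ controlling which ``row wire'' is used. Chain the cells of row $i$ so that the path from $r_1$ threads left-to-right through all cells of that row, being forced to use the same first coordinate in every cell of the row (a single designated terminal $t^{\text{row}}_i$ per row, placed so that the only way to reach it is via a consistent column wire, enforces this); symmetrically, route $r_2$ top-to-bottom through each column, with a terminal $t^{\text{col}}_j$ per column enforcing that the path uses a consistent second coordinate down the column. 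The shared cell gadgets, together with the requirement that both $r_1$ and $r_2$ have paths to \emph{all} terminals, is what couples the two directions: $r_1$'s horizontal choice in cell $(i,j)$ and $r_2$'s vertical choice in cell $(i,j)$ must both be realizable inside the single selected pair of $\mathcal{S}_{i,j}$. Set $k$ to exactly the number of vertices needed for one consistent selection (sum over cells of the gadget-path length, plus the wires and terminal-attachment vertices), so that a solution of size $\le k$ has no slack and must make exactly one coherent choice per cell. Since the construction is polynomial-time and each gadget is a DAG with all arcs oriented ``forward'' (left-to-right / top-to-bottom / entry-to-exit), the resulting graph is a DAG. The number of terminals is $|T|=2\kappa+O(1)$ (one per row, one per column, plus a couple of global sink terminals to anchor the ends of the $r_1$- and $r_2$-paths), and $n$ is polynomial in the \textsc{Grid Tiling} instance, so an $f(|T|)n^{o(|T|/\log|T|)}$ algorithm for 2-RST on DAGs would yield an $f(\kappa)n^{o(\kappa/\log\kappa)}$ algorithm for \textsc{Grid Tiling}, contradicting ETH; W[1]-hardness follows identically since the reduction is parameter-preserving.

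The two directions of correctness are then routine but must be checked: (i) a valid tiling yields a vertex set of size exactly $k$ realizing all the required $r\to t$ paths — here one simply takes the union of the selected gadget-paths and wires and verifies each terminal is reached from both roots; (ii) conversely any solution $S$ with $|S|\le k$ must, by the tightness of the budget, select exactly one gadget-path per cell, and the terminals $t^{\text{row}}_i$, $t^{\text{col}}_j$ force row- and column-consistency while the shared gadgets force the horizontal and vertical selections in each cell to be compatible, i.e. to come from a single pair in $\mathcal{S}_{i,j}$, giving a valid tiling.

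The main obstacle I anticipate is the gadget design that makes the budget tight while simultaneously (a) keeping the graph acyclic, (b) forcing \emph{both} roots to commit to a single shared pair in each cell rather than independently routing through ``cheap'' shortcuts, and (c) keeping $|T|$ linear in $\kappa$ rather than, say, quadratic. Point (b) is delicate: one must prevent a solution from, e.g., using one pair of $\mathcal{S}_{i,j}$ for the $r_1$-path and a different pair for the $r_2$-path; the standard fix is to make the ``column wire for value $a$'' and the ``row wire for value $b$'' in cell $(i,j)$ share vertices only along the unique gadget-path corresponding to $(a,b)$, so that a budget-tight solution cannot afford two separate internal structures. Finally the case $q>2$ is handled by a trivial extension — add $q-2$ further roots $r_3,\dots,r_q$ each with a direct out-arc to every terminal (or to a single source that dominates everything), contributing $O(q)$ to the budget and nothing to the hardness — which is exactly how the abstract's ``for any constant $q\ge 2$'' statement is obtained from the $q=2$ theorem above.
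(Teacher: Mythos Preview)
Your high-level plan (two roots playing asymmetric row/column roles in a Grid Tiling grid, with one terminal per row and one per column) is plausible and genuinely different from what the paper does, but as written it is only a sketch, and the one piece you yourself flag as ``the main obstacle'' is exactly the piece that carries the whole argument. Concretely: in $2$-RST \emph{both} roots must reach \emph{every} terminal. Your narrative assigns $r_1$ the row terminals and $r_2$ the column terminals, but you never say how $r_1$ reaches $t^{\text{col}}_j$ (or $r_2$ reaches $t^{\text{row}}_i$). If you give each root free arcs to the ``other'' terminals, you must then argue that a budget-tight solution still cannot let $r_1$ drift into the column wiring (or $r_2$ into the row wiring) inside a cell and thereby decouple the two choices; if instead both roots enter the grid the same way, the asymmetry you rely on disappears. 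This is fixable, but it requires specifying the out-neighbourhoods of $r_1$ and $r_2$ and the internal cell gadget precisely enough to run the tight-budget counting argument, and none of that is in the proposal.

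The paper avoids all of this grid machinery. It reduces from \textsc{Partitioned Subgraph Isomorphism} and builds a four-layer DAG: one root $r_V$ has arcs only into the vertex layer $V_H$, the other root $r_E$ has arcs only into an edge layer $\{a_{u,v}\}$; both layers feed into a common layer $\{b_{u,v},b_{v,u}\}$ of in-degree~$2$, out-degree~$1$ vertices, which in turn feed the terminals $\{t_{i,j},t_{j,i}:\{i,j\}\in E_G\}$. Thus $r_V$ can reach $t_{i,j}$ only through some $u$ with $col(u)=i$, while $r_E$ can reach the same $t_{i,j}$ only through some $a_{u,v}$ with $\{col(u),col(v)\}=\{i,j\}$, and the shared $b$-vertices plus the tight budget $k'=|V_G|+3|E_G|$ force the vertex choices and edge choices to agree. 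This gives $|T|=2|E_G|$, so Marx's $n^{o(k/\log k)}$ lower bound for PSI (with $k=|E_G|$) transfers directly. The construction is a few lines and needs no consistency gadgets beyond the $b$-layer.

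If you do want to push your approach through, note that once the cell gadget is correct you would get $|T|=\Theta(\kappa)$ from \textsc{Grid Tiling}, whose ETH bound is $n^{o(\kappa)}$; so you would actually obtain the slightly stronger $f(|T|)\,n^{o(|T|)}$ lower bound rather than $n^{o(|T|/\log|T|)}$. Your final remark about handling $q>2$ by adding dummy roots with a single out-arc to $r_1$ matches the paper's corollary exactly.
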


The rest of this section is devoted to the proof of this theorem. 
Our starting point are known results for the following problem.\\
% \begin{center}
% \begin{boxedminipage}{.8\textwidth}
\defproblem{\textsc{Partitioned Subgraph Isomorphism (PSI)}}
{Undirected graphs $H=(V_H,E_H)$ and $G=(V_G,E_G)$ and a coloring function $col:V_H\rightarrow V_G$.}
{Is there an injection $\phi:V_G\rightarrow V_H$ such that for every $i\in V_G$, $col(\phi(i))=i$ and for every $\{i,j\}\in E_G$, $\{\phi(i),\phi(j)\}\in E_H$?}

PSI is known to be W[1]-hard~\cite{Pietrzak03}. For the second part of the theorem we need the following lemma by Marx~\cite{Marx07}.

\begin{lemma}[{\cite[Corollary 6.3]{Marx07}}]\label{lem:psi hardness}
{\sc Partitioned Subgraph Isomorphism} cannot be solved in $f(k)V_H^{o(\frac{k} {\log k})}$ time, where $f$ is an arbitrary function and $k=\vert E_G\vert$ is the number of edges in the smaller graph $G$, unless {ETH} fails.
\end{lemma}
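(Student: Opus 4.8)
The plan is to prove \autoref{thm:hardness} by a polynomial-time parameterized reduction from \textsc{Partitioned Subgraph Isomorphism}, where the number of edges $|E_G|$ of the pattern graph controls the number of terminals $|T|$ in the constructed 2-RST instance. Given an instance $(H,G,col)$ of PSI, I would build a directed acyclic graph $D$ together with a root set $R=\{r_1,r_2\}$ and a terminal set $T$ of size $O(|E_G|)$, so that a solution of some budget $k$ exists if and only if the desired injection $\phi$ exists. Combined with the W[1]-hardness of PSI~\cite{Pietrzak03} this gives W[1]-hardness of 2-RST with respect to $|T|$, and combined with \autoref{lem:psi hardness} (with $k=|E_G|$) it rules out $f(|T|)n^{o(|T|/\log|T|)}$ algorithms, since $|T|=O(|E_G|)$ means $|E_G|/\log|E_G| = \Omega(|T|/\log|T|)$ and an algorithm beating the bound for 2-RST would beat the bound for PSI.

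The core of the construction is a standard "two-layer selection" DAG. For each vertex $i\in V_G$ I would create a \emph{selection gadget}: a source-like vertex hanging off $r_1$ (or $r_2$) that must reach, via exactly one of $|col^{-1}(i)|$ parallel internal vertices, a sink vertex $d_i\in T$; the chosen internal vertex encodes the value $\phi(i)\in col^{-1}(i)$. For each edge $\{i,j\}\in E_G$ I would create an \emph{edge-verification gadget} with a terminal $e_{ij}\in T$ that can be reached from the roots only along a path that passes through the internal vertices selected for $i$ and for $j$, and only when those two internal vertices correspond to an actual edge of $H$. The two roots $r_1,r_2$ are used to force the selection to be \emph{consistent}: $r_1$'s paths establish one "copy" of each selected vertex and $r_2$'s paths must reuse the same vertices to reach the edge terminals within budget, which is exactly the leverage that having $q=2$ roots provides (a single root would only give DST). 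Orienting all arcs from the $r_1$-side layer towards the $r_2$-side and from selection gadgets towards edge gadgets keeps the graph acyclic; I would lay out the vertex levels explicitly to certify acyclicity.

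I would set the budget $k$ to be exactly the number of vertices needed by an honest solution: one shared internal vertex per $i\in V_G$, the fixed scaffolding vertices of each gadget, and nothing else. The forward direction (from a valid $\phi$ to a solution of size $k$) is the routine accounting: pick the internal vertices encoding $\phi$, take all scaffolding, and check every required $r$-to-$t$ path exists. The reverse direction is where the real work lies: I must show any $S$ with $|S|\le k$ is forced, by a tight counting argument, to select exactly one internal vertex per selection gadget, to reuse that same vertex for both roots and for every incident edge gadget, and hence to encode a genuine injective, color-respecting, edge-preserving map. The main obstacle I anticipate is designing the gadgets so the budget is simultaneously tight enough that cheating is impossible and loose enough that the honest solution fits — in particular ensuring the two roots cannot "share" vertices across different selection gadgets in some unintended way, and ensuring that no terminal can be reached cheaply by bypassing its intended verification path. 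I would handle this by padding each "correct" internal vertex's role (making it the unique low-cost conduit to several terminals) so that any deviation strictly increases the count; the DAG structure helps here because there are no cycles to exploit for free reuse. Finally, since $|T| = |V_G| + |E_G| = O(|E_G|)$ for connected $G$ (we may assume $G$ has no isolated vertices), the parameter bound and the ETH bound follow immediately from \autoref{lem:psi hardness}.

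For general $q\ge 2$, I would note at the end that the same construction works verbatim after adding $q-2$ dummy roots whose only required terminal is a single fresh sink reachable by one extra vertex, which changes $|T|$ by $1$ and $k$ by $q-2$, so the hardness and the ETH lower bound transfer to $q$-RST for every fixed $q\ge 2$, matching the statement in the introduction.
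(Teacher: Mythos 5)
The statement you were given is \autoref{lem:psi hardness}, which the paper does not prove at all: it is a direct citation of Corollary~6.3 of Marx~\cite{Marx07}, invoked as a black box. Your proposal, however, opens with ``The plan is to prove \autoref{thm:hardness}\dots'' and goes on to describe a reduction \emph{from} PSI \emph{to} 2-RST. That is a plan for \autoref{thm:hardness}, not for \autoref{lem:psi hardness}, and it cannot possibly establish the latter: a reduction out of PSI presupposes PSI is hard, so using it to prove the ETH lower bound for PSI itself is circular. A genuine proof of \autoref{lem:psi hardness} must go the other way, starting from ETH for 3-SAT, passing through the Sparsification Lemma and 3-Coloring, and then through Marx's embedding machinery (degree reduction and expander-based host graphs), which is exactly where the $o(k/\log k)$ exponent loss comes from; none of that is present in your sketch.

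Treating your sketch instead as an attempt at \autoref{thm:hardness}: it is in the right spirit (reduce from PSI, keep $|T|=O(|E_G|)$, exploit the two roots to force consistency of the selection, and use a tight budget), but it diverges from the paper's construction and leaves the hard part unargued. The paper's terminal set is $T=\{t_{i,j},t_{j,i}\mid\{i,j\}\in E_G\}$ of size exactly $2|E_G|$, with \emph{no} per-vertex terminals, and the two roots are asymmetric: $r_V$ fans out to vertex-copies of $H$ and $r_E$ to edge-copies, with budget $k'=3|E_G|+|V_G|$ tight enough that each edge-copy must have out-degree $2$ and each $b_{u,v}$ must be fed by the matching selected vertex. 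You instead introduce a sink $d_i$ per vertex $i\in V_G$, giving $|T|=|V_G|+|E_G|$, and you explicitly defer the crucial counting argument (``the main obstacle I anticipate\dots'') that makes cheating impossible while letting the honest solution through. That deferred step is the whole content of the soundness direction, so as written the sketch is incomplete even for \autoref{thm:hardness}; and for \autoref{lem:psi hardness} it is the wrong approach entirely.
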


We provide a parameterized reduction from PSI parameterized by $|E_G|$ to 2-RST parameterized by $|T|$. 

Let $(H=(V_H,E_H),G=(V_G,E_G),col)$ be an instance of PSI. To simplify the description let us assume that there are some strict linear orders $<$ on the vertices in $V_H$ and in $V_G$ such that for every $u$ and $v$ with $u < v$ we have $col(u)<col(v)$. 
We also assume that for every edge $\{u,v\} \in E_h$ we have $\{col(u),col(v)\} \in E_G$ as we can delete the edges not satisfying this condition without affecting the answer to the instance. Finally, since the problem can be solved for each connected component separately, we assume that $G$ is connected and has at least one edge.

We start by constructing the directed graph $G'=(V',A')$ (see also Fig.~\ref{fig:reduction}).
We let $V'= R \cup V_H \cup E' \cup F \cup T$, where
\begin{align*}
 R &= \{r_V,r_E\},\\
 E'&= \{a_{u,v} \mid \{u,v\} \in E_H, u < v\},\\
 F &= \{b_{u,v}, b_{v,u} \mid \{u,v\} \in E\}\text{, and}\\
 T &= \{t_{i,j}, t_{j,i} \mid \{i,j\} \in E_G, i < j\}. 
\end{align*}
The set of arcs is constructed as follows. We add arcs from $r_V$ to all vertices in $V_H$ and from $r_E$ to all vertices in $E'$.
For every edge $\{u,v\}$ where $u < v$, we add the following set of arcs:
\begin{itemize}
 \item an arc from $a_{u,v}$ to $b_{u,v}$ and an arc from $a_{u,v}$ to $b_{v,u}$,
 \item an arc from $u$ to $b_{u,v}$ and an arc from $v$ to $b_{v,u}$, and
 \item an arc from $b_{u,v}$ to $t_{col(u),col(v)}$ and an arc from $b_{v,u}$ to $t_{col(v),col(u)}$.
\end{itemize}
\begin{figure}[t]
\begin{center}
\includegraphics[width=\textwidth]{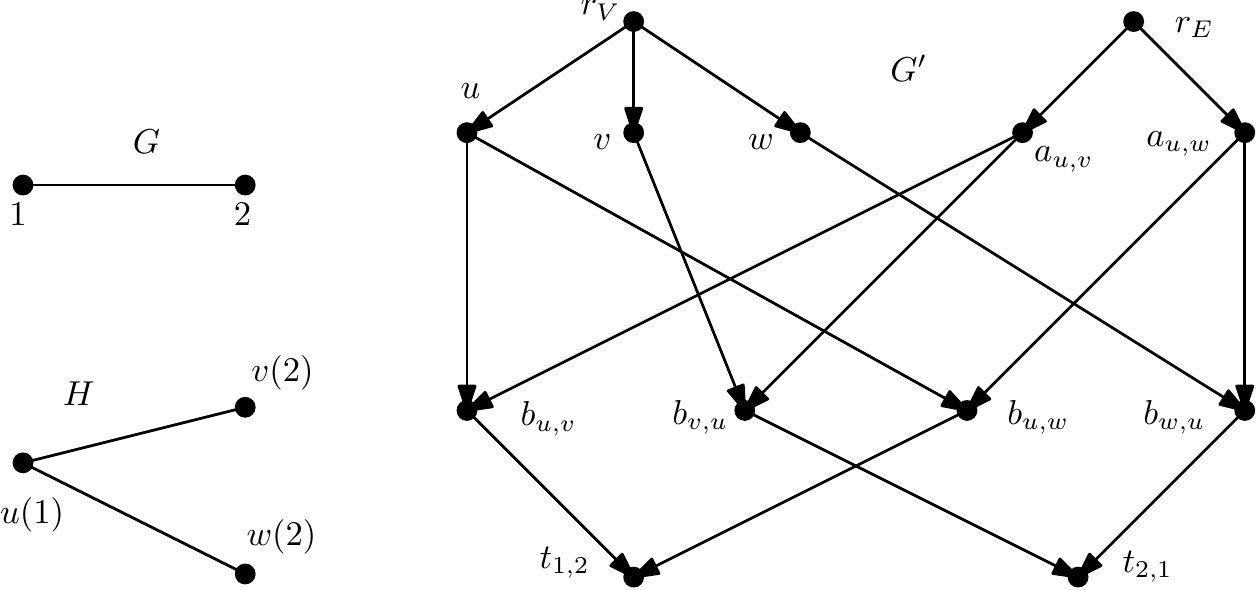}
\end{center}
 \caption{Illustration of the reduction in the proof of Theorem~\ref{thm:hardness}. The graphs $G$ and $H$ are on the left, $col$ is in the brackets for each vertex, and the graph $G'$ is on the right.}\label{fig:reduction}
\end{figure}

To finish the construction we let $k'=3|E_G|+|V_G|$. Note that we have $|T|=2|E_G|$, i.e., the new parameter depends linearly on the original one and the constructed graph is a directed acyclic graph.

We start the proof of correctness of the reduction by showing that if $(H,G,col)$ is a yes-instance of PSI, then $(G', R,T,k')$ is a yes-instance of 2-RST. Let $\phi:V_G\rightarrow V_H$ be the injection witnessing that $(H,G,col)$ is a yes-instance. We let $S=\{\phi(i) \mid i \in V_G\} \cup \{a_{\phi(i),\phi(j)},b_{\phi(i),\phi(j)},b_{\phi(j),\phi(i)} \mid \{i,j\} \in E_G, i <j\}$. Then for every $\{i,j\} \in E_G$, where $i <j$, we have $col(\phi(i))=i$, $col(\phi(j))=j$, $\{\phi(i),\phi(j)\} \in E_H$, and $\phi(i) < \phi(j)$. Therefore the vertices $r_V,\phi(i), b_{\phi(i),\phi(j)}, t_{i,j}$ form a path from $r_V$ to $t_{i,j}$ in $G'[R \cup S \cup T]$, the vertices $r_V,\phi(j), b_{\phi(j),\phi(i)}, t_{j,i}$ form a path from $r_V$ to $t_{j,i}$, the vertices $r_E,a_{\phi(i),\phi(j)}, b_{\phi(i),\phi(j)}, t_{i,j}$ form a path from $r_E$ to $t_{i,j}$, and the vertices $r_E,a_{\phi(i),\phi(j)}, b_{\phi(j),\phi(i)}, t_{j,i}$ form a path from $r_E$ to $t_{j,i}$ (here we use that $col(\phi(i))=i$, $col(\phi(j))=j$, and $\{\phi(i),\phi(j)\} \in E_H$). Since $|S|=|V_G|+3|E_G|=k'$, $(G', R,T,k')$ is indeed a yes-instance of 2-RST.

To prove the other implication, let us assume that $(G',R,T,k')$ is a yes-instance of $2$-RST and $S$ is the set witnessing it. As the vertices in $F$ are the only in-neighbors of vertices in $T$ and each vertex in $F$ has out-degree one, there must be at least $2|E_G|$ vertices of $F$ in $S$. There is a directed path from $a_{u,v}$ to $t_{i,j}$ in $G'$ if and only if $\{i,j\}=\{col(u),col(v)\}$. Hence there must be a separate vertex of $E'$ in $S$ for each edge of $G$. Similarly, there is a directed path from $u$ to $t_{i,j}$ in $G'$ if and only if $i=col(u)$. Hence, there is a separate vertex of $V_H$ in $S$ for each vertex of $G$ (note that $G$ is connected and has at least one edge). Therefore, as $|S| \leq 3|E_G| +|V_G|$, the budget is tight and there are exactly that many vertices from each of the sets.

For every $i \in V_G$ let $\phi(i)$ be the unique vertex $u$ of $V_H \cap S$ with $col(u)=i$. We show that $\phi$ has the desired properties. First of all it is injective and $col(\phi(i))=i$ for every $i \in V_G$ by definition. Next, note that a vertex $u \in S \cap V_H$ is connected to a vertex $b_{u',v}$ only if $u'=u$. Furthermore, for each $t_{i,j}$ there is a unique in-neighbor $b_{u,v}$ in $S$. Since there is a path from $r_V$ to each $t_{i,j}$, it follows that for every $b_{u,v} \in S$ we have $u \in S \cap V_H$, namely $u=\phi(col(u))$. 

Now each vertex in $E'$ has out-degree at most 2 and each vertex of $F$ has out-degree 1 in $G'[R\cup T \cup S]$. Thus, in order for $r_E$ to reach all $2|E_G|$ vertices in $T$, each of the $|E_G|$ vertices in $E' \cap S$ must have out-degree 2 in $G'[R\cup T \cup S]$, i.e., $b_{u,v}$ is in $S$ if and only if $b_{v,u}$ is. Hence, for every $b_{u,v} \in S$ we have also $v \in S \cap V_H$.
In other words, for every $\{i,j\} \in E_G$ there is an edge $\{\phi(i),\phi(j)\} \in E_H$, finishing the correctness of the reduction.

Since the new parameter $|T|$ is linear in the original $|E_G|$, the theorem now follows from the result of Pietrzak~\cite{Pietrzak03} and from \autoref{lem:psi hardness}.

The same holds also with respect to the parameter $(|T|+k)$.

\begin{corollary}
 $q$-RST is W[1]-hard with respect to $|T|$ even on directed acyclic graphs for every $q \ge 2$.
 Moreover, there is no algorithm for $q$-RST on directed acyclic graphs running in time $f(|T|)n^{o\left(\frac{|T|}{\log |T|}\right)}$ for any constant $q \ge 2$, unless ETH fails.
\end{corollary}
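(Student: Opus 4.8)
\textit{Proof plan.}
The plan is to derive the corollary from the $q=2$ case established in \autoref{thm:hardness} by a trivial padding argument. Take the instance $(G',R,T,k')$ of 2-RST produced by the reduction from PSI above, where $R=\{r_V,r_E\}$ and $k'=3|E_G|+|V_G|$. Given an integer $q\ge 2$, I would build an instance $(G'_q,R_q,T,k')$ of $q$-RST as follows: add $q-2$ fresh vertices $r_3,\dots,r_q$ to $G'$ and, for each $i\in\{3,\dots,q\}$, an arc from $r_i$ to every vertex of $T$; set $R_q=R\cup\{r_3,\dots,r_q\}$ and keep $T$ and $k'$ unchanged. Since $|R_q|=q$ and $|T|=2|E_G|$ is untouched, the new parameter is still linear in $|E_G|$ and the construction is polynomial.

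The two things to check are that the construction stays within the claimed graph class and that the added roots do not change which sets $S$ are feasible. For the first: the vertices of $T$ are sinks in $G'$ and each $r_i$ has in-degree zero, so the new arcs create no directed cycle and $G'_q$ is still a DAG. For the second: each new root $r_i$ reaches every terminal via a single arc, using no vertex of $S$, while $r_i$ is useless for $r_V$ and $r_E$ as it has no out-arc outside $T$; hence for every candidate set $S\subseteq V'$, the subgraph $G'_q[R_q\cup S\cup T]$ contains a path from each root to each terminal if and only if $G'[R\cup S\cup T]$ does. Consequently $(G'_q,R_q,T,k')$ is a yes-instance of $q$-RST exactly when $(G',R,T,k')$ is a yes-instance of 2-RST, which by the correctness proof of the reduction happens exactly when the original PSI instance is a yes-instance.

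This makes the composition a parameterized reduction from PSI parameterized by $|E_G|$ to $q$-RST parameterized by $|T|$ on directed acyclic graphs, for every fixed $q\ge 2$. The W[1]-hardness then follows from Pietrzak's result~\cite{Pietrzak03}, and the ruling out of $f(|T|)n^{o(|T|/\log|T|)}$ algorithms follows from \autoref{lem:psi hardness} together with the linear dependence $|T|=2|E_G|$, exactly as in the proof of \autoref{thm:hardness}. There is no genuine obstacle here; the only subtlety is confirming that the dummy roots are "free'' (reachable to all of $T$ at zero extra cost) and "harmless'' (of no help to the other roots), which is immediate from their in-degree zero and their out-neighbourhood being precisely the sink set $T$.
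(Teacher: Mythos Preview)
Your proposal is correct and follows essentially the same padding idea as the paper: add $q-2$ dummy roots that trivially reach all terminals and cannot help the original two roots. The only cosmetic difference is that the paper gives each new root a single arc to $r_V$ rather than arcs to all of $T$; both variants preserve the DAG property and the equivalence of yes-instances for the same reasons you state.
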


\begin{proof}
 It is enough to add $q-2$ vertices to $R$, each having an arc only to $r_V$. It is easy to verify that this preserves the properties of the construction. 
\end{proof}

\section{Restriction to Solutions with Pedestal}\label{sec:pedestal}
Having shown in the previous section that $q$-RST is still too general to allow for the nice algorithms known for DST and SCSS, in this section we restrict ourselves further.
To this end, we modify the definition of our problem in the sense that we do not require to obtain a path from each vertex of $R$ only to each vertex of $T$, but also to each other vertex of $R$.\\
\deftproblem{\textsc{$q$-Root Steiner Tree with Pedestal ($q$-RST-P)}}
{A directed graph $G=(V,A)$, two subsets of its vertices $R,T \subseteq V$ with $|R| = q$.}
{Find a minimum size of a set $S \subseteq V$ such that in $G[R \cup S \cup T]$ there is a directed path from $r$ to $t$ for every $r \in R$ and every $t \in R \cup T$.}

Note that this variant of $q$-RST could be also modeled by requiring $R \subseteq T$. However, to simplify the description, we assume $R \cap T = \emptyset$.

\begin{theorem}
For every $q \ge 1$ the problem $q$-RST-P is fixed-parameter tractable with respect to $|T|$. Namely there is an algorithm solving it in $O(2^{2q+4|T|}\cdot n^{2q+O(1)})$ time, where the constants hidden in the $O()$ notations are independent of $|T|$ and $q$.
\end{theorem}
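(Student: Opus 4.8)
The plan is to generalize the Feldman--Ruhl token game for SCSS to handle the extra terminals in $T$. Recall the structure of the Feldman--Ruhl approach: for SCSS on a terminal set of size $q$, one picks a designated terminal $r^* \in R$ and observes that any optimal solution is a union of $q-1$ "out-trees" rooted at $r^*$ (paths from $r^*$ to the other terminals) and $q-1$ "in-trees" (paths from the other terminals to $r^*$), and these can be traced by a bounded number of tokens moving along the arcs. The crucial structural fact, which the introduction flags, is that there is a single vertex $v^*$ (the "pedestal") through which every required path can be routed, allowing vertices to repeat. So first I would prove this pedestal lemma: given any feasible $S$, there is a vertex $v^* \in R \cup S \cup T$ such that for every $r \in R$ and every $t \in R \cup T$ there is an $r$-to-$t$ walk in $G[R\cup S\cup T]$ passing through $v^*$; this follows because in $G[R \cup S \cup T]$ every $r \in R$ reaches every other $r' \in R$ and vice versa, so the $R$-vertices lie in one strongly connected region, and one can take $v^* = r^*$ for a fixed $r^* \in R$ — then an $r$-to-$t$ walk is obtained by concatenating $r \to r^*$ (using that $r^*\in R$) with $r^* \to t$.

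Next I would set up the \textbf{cautious game}. The configuration consists of the positions of $2q-2$ "root tokens" (as in Feldman--Ruhl, tracing the $R$-to-$r^*$ and $r^*$-to-$R$ paths) together with, for each $t \in T$, a marker recording whether $t$ has been "collected" and a token heading toward it; since paths to all of $T$ pass through $v^*=r^*$, it suffices to trace, for each $t\in T$, a single path from $r^*$ to $t$. Moves are: advancing a token along an arc (cost $1$ if the head is a new vertex), and the Feldman--Ruhl "parallel"/"merge" moves that let a group of tokens share a subpath and pay only once for shared vertices. The cost of a move sequence equals $|S|$ for the solution it encodes, and conversely every $S$ is encoded by some sequence. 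This gives correctness but the number of configurations is too large. I would then argue, exactly paralleling Feldman--Ruhl's normal-form analysis, that there is always an optimal sequence in which moves can be grouped so that between "checkpoints" at most two tokens are moving independently, and the only extra bookkeeping is which subset of $T$ got collected in each grouped move — this is the \textbf{accelerated game}, whose moves look like Feldman--Ruhl moves decorated with a subset $T' \subseteq T$. Since at a configuration only the $2q-2$ root-token positions (each in $V$, giving $n^{2q-2}$ choices, or $n^{2q}$ with the endpoints of the currently-active shared segment) and the collected-subset of $T$ (giving $2^{|T|}$) matter, the state space has size $O(2^{|T|} n^{2q+O(1)})$.

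Finally, I would compute the running time. The accelerated game is solved by a shortest-path / Dijkstra-style computation on the configuration graph. The number of configurations is $O(2^{|T|}\cdot n^{2q+O(1)})$; the number of move types out of a configuration includes choices of a subset of the (constantly many active) tokens and a subset $T'\subseteq T$ collected in the move, contributing another $2^{O(q)}\cdot 2^{|T|}$ factor; and evaluating the cost of a single "long" grouped move reduces to computing a minimum-cost structure that is itself a \textsc{Directed Steiner Tree} instance with at most $O(|T|)+O(1)$ terminals, for which we invoke Nederlof's $O(2^{|T|} n^{O(1)})$ algorithm as the promised subroutine. Multiplying, $2^{|T|}$ (states' $T$-part) $\times\ 2^{O(q)}$ (token subsets) $\times\ 2^{|T|}$ ($T'$ in a move) $\times\ 2^{|T|}$ (DST subroutine) $\times\ n^{2q+O(1)}$ gives $O(2^{2q+4|T|}\, n^{2q+O(1)})$ after absorbing the $2^{O(q)}$ into $2^{2q}$ and lower-order terms into $n^{O(1)}$ — matching the claimed bound. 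I expect the main obstacle to be the normal-form argument that passes from the cautious game to the accelerated game: one must show that collecting $T$-terminals can always be deferred and batched onto the root-token moves without increasing cost, so that the configuration need only remember the \emph{set} of already-collected terminals and not the positions of $|T|$ separate pursuing tokens; this requires carefully adapting Feldman--Ruhl's exchange arguments to the asymmetric setting where $T$-paths are one-directional (only $r^* \to t$, never back) and verifying that the pedestal $r^*$ genuinely lets every $T$-path be rerouted through the already-traced root structure.
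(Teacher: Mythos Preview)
Your proposal is essentially the same approach as the paper: fix a pedestal $r_0\in R$, set up a cautious game with Feldman--Ruhl $F$/$B$-tokens on $R\setminus\{r_0\}$ plus backward-moving $D$-tokens on $T$, then pass to an accelerated game in which $D$-tokens are recorded only as a subset of $T$ and are absorbed into moves via a DST subroutine (Nederlof), and solve by shortest path on the configuration graph. Your running-time accounting and identification of the main obstacle are both on target.

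One remark on that obstacle: the paper does \emph{not} obtain the deferral of $D$-tokens merely by ``paralleling Feldman--Ruhl's normal-form analysis.'' Two additional ingredients are needed. First, an edge-minimality argument on the solution subgraph shows that flips in which only $F$- and $D$-tokens participate (no $B$-token) can be eliminated; otherwise one could delete an arc entering the flip endpoint and contradict minimality. Second, to rule out the problematic case where an $F$-token and a $D$-token meet at a vertex and then the $D$-token continues with single moves, the paper passes to a ``forking'' of $G$ (splitting each vertex into an in-copy and an out-copy), argues the property there, and projects back. Only after these two steps is it safe to batch all $D$-token motion into the enclosing $B$-token or flip move. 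Also, the cost of an accelerated flip is not a single DST call but a Held--Karp-style minimum over orderings of the $F'\cup B'$ vertices, with a DST call attached to each segment; this is where the extra factors in the exponent of $2$ arise.
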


The rest of this section is devoted to the proof of this theorem.

The $q$-RST-P problem with the set $T$ empty is exactly the SCSS problem (with $q$ terminals).
This problem was shown to be polynomial time solvable for every constant $q$ by Feldman and Ruhl~\cite{FeldmanR06} using a modeling by a token game. The cost of an optimal strategy for that game equals cost of the smallest solution to the SCSS instance.

We first slightly modify this game to model the problem for arbitrary $T$ in Subsection~\ref{sec:cautious}. We show there that optimal strategies for this game have some interesting properties which we can further use. Then, in Subsection~\ref{sec:accelerated}, we introduce a new game with more powerful moves which allows us to make many moves of the original game at once. Finally, in Subsection~\ref{sec:algo}, we show that the optimal strategies for the new game can be computed in the claimed running time.

\subsection{Cautious Token Game}\label{sec:cautious}
In this subsection we show how to modify the original token of Feldman and Ruhl in order to model the $q$-RST-P problem. 
%To solve the problem let us first describe the (almost) original token game as used by Feldman and Ruhl. 
We fix a vertex $r_0 \in R$ and let $R'=R \setminus \{r_0\}$. For a solution $S$ the graph $G[R\cup S\cup T]$ will contain a path from $r_0$ to $t$ for each vertex $t$ in $R' \cup T$. These paths together form an out-tree rooted at $r_0$ which is called the \emph{backward tree}. Also there is a path from each of the vertices in $R'$ to $r_0$, and these together form an in-tree rooted at $r_0$, called the \emph{forward tree}. 

The game traces the two trees by having three types of tokens, where two of them behave similarly. First, we have an $F$-token at each of the vertices of $R'$ and this token moves forward along the arcs of graph $G$. Second, we have a $B$-token at each vertex of $R'$, moving backward against the direction of the arcs of $G$. The third type of tokens we use (in difference to Feldman and Ruhl) are $D$-tokens which are originally placed one on each of the vertices of $T$ and move similarly as $B$-tokens. 

The purpose of the tokens is to trace the forward and backward tree. Hence, whenever two tokens of the same type arrive at the same vertex we can merge them to one token. This is also the case for $B$-tokens and $D$-tokens, and in case a $B$-token merges with a $D$-token we let the merged token be a $B$-token. The purpose of introducing the $D$-tokens is to show that these are somewhat less important for the game than $B$-tokens and, hence, they can be treated in a different way in the new game we will introduce in the next subsection.

The state of the game can be described by tree subset of vertices $(F,B,D)$ representing the set of vertices occupied by $F$-tokens, $B$-tokens, and $D$-tokens, respectively. Note that $|F| \le q, |B| \le q, |D| \le |T|$ during the whole game. Hence we take $F,B \in \binom{V}{\le q}$ and $D \in \binom{V}{ \le |T|}$ (here and on $\binom{V}{\le q}$ is the set of subsets of $V$ of size at most $q$).

The allowed moves are the following:
\begin{enumerate}[(1)]
 \item \emph{Single moves for respective tokens:} For every arc $(u,v) \in A$ and all sets $F,B \in \binom{V}{\le q}$ and $D \in \binom{V}{ \le |T|}$ we introduce the following moves:
  \begin{enumerate}
    \item If $u \in F$, then we have a move 
    $(F,B,D) \xrightarrow{c} ((F \setminus \{u\}) \cup \{v\},B,D),$ 
    where the cost $c$ of the move is 1 if $v \notin F\cup B \cup D$ and 0 otherwise.
    \item If $v \in B$, then we have a move
    $(F,B,D) \xrightarrow{c} (F,(B \setminus \{v\}) \cup \{u\},D \setminus \{u\}),$ 
    where the cost $c$ of the move is 1 if $u \notin F\cup B \cup D$ and 0 otherwise.
    \item If $v \in D$, then we have a move
    $(F,B,D) \xrightarrow{c} (F,B,(D \setminus \{v\}) \cup (\{u\} \setminus B)),$ 
    where the cost $c$ of the move is 1 if $u \notin F\cup B \cup D$ and 0 otherwise.
 \end{enumerate}
 \item \emph{Flipping:} For all sets $F,B \in \binom{V}{\le q}$ and $D \in \binom{V}{ \le |T|}$ we introduce the following moves:
 \begin{enumerate}
   \item if $F' \subseteq F$, $B' \subseteq B$, $D' \subseteq D$, $f \in F'$, and $b \in B'$,  %and there is a path from $f$ to $b$ going through all vertices in $F' \cup B' \cup D'$, 
   then we have a move
   $(F,B,D) \xrightarrow{c} ((F\setminus F') \cup \{b\},(B\setminus B') \cup \{f\},D \setminus (D' \cup \{f\})),$ 
   %where $c$ is the number of vertices on a shortest path from $f$ to $b$ going through all vertices in $F' \cup B' \cup D'$, excluding the vertices in $F' \cup B' \cup D' \cup \{f,b\}$.
   where $c$ is the number of vertices on a shortest walk from $f$ to $b$ going through all vertices in $F' \cup B' \cup D'$. Here each vertex is counted each time it is visited, but vertices in $F' \cup B' \cup D'$ are counted once less.
   \item if $F' \subseteq F$, $B' \subseteq B$, $B' \neq \emptyset$, $D' \subseteq D$, $f \in F'$, and $d \in D'$, %and there is a path from $f$ to $d$ going through all vertices in $F' \cup B' \cup D'$, 
   then we have a move
   $(F,B,D) \xrightarrow{c} ((F\setminus F') \cup \{d\},(B\setminus B') \cup \{f\},D \setminus (D' \cup \{f\})),$ 
   %where $c$ is the number of vertices on a shortest path from $f$ to $d$ going through all vertices in $F' \cup B' \cup D'$, excluding the vertices in $F' \cup B' \cup D' \cup \{f,d\}$. 
   where $c$ is the number of vertices on a shortest walk from $f$ to $d$ going through all vertices in $F' \cup B' \cup D'$. Here, again, each vertex is counted each time it is visited, but vertices in $F' \cup B' \cup D'$ are counted once less.
   \item if $F' \subseteq F$, $D' \subseteq D$, $f \in F'$, and $d \in D'$,  % and there is a path from $f$ to $d$ going through all vertices in $F' \cup D'$, 
   then we have a move
   $(F,B,D) \xrightarrow{c} ((F\setminus F') \cup \{d\},B,(D \setminus D')\cup (\{f\} \setminus B)),$ 
   %where $c$ is the number of vertices on a shortest path from $f$ to $d$ going through all vertices in $F' \cup D'$, excluding the vertices in $F' \cup D' \cup \{f,d\}$. 
   where $c$ is the number of vertices on a shortest walk from $f$ to $d$ going through all vertices in $F' \cup D'$. as in the previous cases, each vertex is counted each time it is visited, but vertices in $F' \cup D'$ are counted once less.
  \end{enumerate}
\end{enumerate}

The original game of Feldman and Ruhl has only three types of moves: Single moves for $F$-tokens (exactly as (1-a)), single moves for $B$-tokens (similar as (1-b) and (1-c)) and flipping (all of (2)). If we did not distinguish the $B$-tokens and $D$-tokens (and consider all of them as $B$-tokens, we would get exactly this three types of moves. We make use of this fact in the proof of the equivalence of costs of optimal strategies for this game and sizes of solutions for the $q$-RST-P instance.

We distinguish the $B$- and $D$-tokens since we aim to show, e.g., that there is an optimal strategy for the game not using any moves of type (2-c). During the whole game the moves ensure that the invariant $D \cap B = \emptyset$ is maintained. This could be easily achieved by taking $D= D \setminus B$ after each move, however, we prefer to be more specific in taking out only the vertices which could actually newly appear in the intersection.

Now we would like to claim, that the game represents the instance $(G,R,T)$ of $q$-RST-P. Namely, that the minimum size of a solution to $(G,R,T)$ is exactly one less than the minimum cost of moves to get from $(R',R',T)$ to $(\{r_0\},\{r_0\},\emptyset)$ in the cautious token game. The easier direction is summarized by the following lemma (see also Lemma 3.1 of \cite{FeldmanR06}):

\begin{lemma}
 If there is a move sequence from $(R',R',T)$ to $(\{r_0\},\{r_0\},\emptyset)$ of total cost $c$, then there is a set $S \subseteq V$ of size at most $c-1$ such that in $G[R \cup S \cup T]$ there is a directed path from $r$ to $t$ for every $r \in R$ and every $t \in R \cup T$. Moreover, given the sequence, the corresponding set $S$ is easy to find.
\end{lemma}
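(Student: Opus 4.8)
The plan is to extract from a cost-$c$ move sequence the set $S$ of all vertices that are ever "charged" a cost of $1$ during the sequence (i.e., the vertices that are newly occupied in a single move of type (1) or that are counted by a flipping move of type (2)), and then argue that $G[R\cup S\cup T]$ already contains all the required paths. First I would set up the bookkeeping: process the moves one at a time, and whenever a move incurs cost $1$ at a vertex $v$ via a single move, put $v$ into $S$; whenever a flipping move (2-a)/(2-b)/(2-c) is used, realize it as the concrete shortest walk witnessing the number $c$ and add all internal vertices of that walk that are not already among the tokens into $S$. Since the total cost is $c$ and the starting configuration has $q-1+q-1+|T|$ tokens (with $r_0$ never being "bought"), I would account precisely to get $|S|\le c-1$: the $+1$ comes from the vertex $r_0$, which is the one vertex of $R$ that is reached/occupied from the beginning and is never charged.

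The core of the argument is to show every required path is present in $G[R\cup S\cup T]$. I would exploit the interpretation of the tokens as tracing the forward tree (an in-tree to $r_0$) and the backward tree (an out-tree from $r_0$): an $F$-token at a vertex $u$ certifies that $u$ currently reaches, along arcs inside $S\cup R$ already placed, all of the original $R'$-vertices whose tokens have merged into this one; symmetrically a $B$-token (or $D$-token) at $v$ certifies that $v$ is reachable from $r_0$ along placed arcs, and that $v$ reaches all the $T$-vertices (and $R'$-vertices) whose tokens merged into it. Formally I would maintain an invariant along the sequence: for each current $F$-token at $u$ there is a walk in $G[R\cup S]$ from $u$ to some "anchor," for each current $B$- or $D$-token at $v$ there is a walk in $G[R\cup S\cup T]$ from $v$ to every terminal it represents, and single moves / flips only extend these walks by the freshly bought vertices. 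At the terminal configuration $(\{r_0\},\{r_0\},\emptyset)$, all $F$-tokens have collapsed to $r_0$ (giving paths from each $r\in R'$ to $r_0$) and all $B$- and $D$-tokens have collapsed to $r_0$ (giving, by the $B$-token invariant read backwards, paths from $r_0$ to every $r\in R'$ and every $t\in T$); composing through $r_0$ yields a path from each $r\in R$ to each $t\in R\cup T$, all inside $G[R\cup S\cup T]$.

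The main obstacle, I expect, is handling the flipping moves (2) cleanly. A flip takes a set of $F$-tokens $F'$, a set of $B$-tokens $B'$, and a set of $D$-tokens $D'$, picks $f\in F'$ and $b\in B'$ (or $d\in D'$), and replaces them by a single $F$-token at $b$ and a single $B$-token at $f$ while deleting $D'$; the cost is the length of a shortest walk from $f$ to $b$ visiting all of $F'\cup B'\cup D'$, with the visited special vertices discounted by one. I need to argue that this single combinatorial walk simultaneously (i) extends the forward-tree walks of all tokens in $F'$ so that they now reach $f$'s new position consistently, (ii) extends the backward-tree walks of all tokens in $B'\cup D'$ from $b$'s new position, and (iii) that the discount-by-one accounting matches exactly the set of genuinely new vertices added to $S$ (the already-occupied special vertices $F'\cup B'\cup D'$ are not charged again). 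The cleanest way is to traverse the witnessing walk once, mark its new internal vertices, and observe that every special vertex $f'\in F'$ lies on the walk, so it acquires an arc toward its successor on the walk (eventually reaching $b$, hence reaching everything $b$ represents), and symmetrically every $v'\in B'\cup D'$ lies on the walk with $f$ "before" it (via the walk's orientation) so it is reachable from $f$ and thence, after the flip, plays the role of a $B$-token rooted appropriately. Once this per-move extension is verified, the ``Moreover'' part is immediate: $S$ is literally read off from the moves as described, so the construction is linear in the length of the sequence.
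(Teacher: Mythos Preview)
Your approach is the same as the paper's: take $S$ to be the set of newly encountered vertices (minus $r_0$) and read off the required paths from the token semantics; the paper's own proof is a two-sentence sketch along these lines citing Feldman--Ruhl. Two small corrections to your write-up, though. First, your accounting for the ``$-1$'' is off: $r_0$ is \emph{not} in the initial configuration $(R',R',T)$ (since $r_0\notin R'$ and $R\cap T=\emptyset$), so the first token to reach $r_0$ \emph{does} pay cost $1$; the bound $|S|\le c-1$ holds because one then excludes $r_0$ from $S$ (it already lies in $R$), not because $r_0$ is never charged. Second, your invariants have the directions garbled: an $F$-token at $u$ certifies that each of its original $R'$-vertices has a path \emph{to} $u$ (not that $u$ reaches them), and the clause ``$v$ is reachable from $r_0$'' in your $B$/$D$-invariant is not true during the run and should simply be dropped---only the clause ``$v$ reaches every terminal whose token has merged into it'' is needed, and that one suffices once all tokens sit at $r_0$.
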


The proof of this lemma follows from the definition of the moves of the game. If we let $S$ be the set of newly encountered vertices in the moves of the sequence excluding $r_0$ we get $|S|+1 \le c$, as the cost of each move is an upper bound on the number of newly encountered vertices including $r_0$. 

The next lemma provides the counterpart. Let us call a move of type (2) \emph{path-driven}, if the minimum size walk in the definition of the cost of the move can be taken as a simple path.

\begin{lemma}\label{lem:cautious}
 If there is a set $S \subseteq V$ of size at most $c-1$ such that in $G[R \cup S \cup T]$ there is a directed path from $r$ to $t$ for every $r \in R$ and every $t \in R \cup T$, then there is a move sequence from $(R',R',T)$ to $(\{r_0\},\{r_0\},\emptyset)$ of total cost at most~$c$ in which all type (2) moves are path-driven. 
\end{lemma}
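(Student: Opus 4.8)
The plan is to start from the solution $S$ and build a move sequence by reversing a natural process that "grows" the forward and backward trees, exactly in the spirit of Lemma~3.2 of Feldman and Ruhl, but keeping track of the extra $D$-tokens. Concretely, fix the backward tree $T_B$ (an out-tree rooted at $r_0$ spanning $R' \cup T$ inside $G[R\cup S\cup T]$) and the forward tree $T_F$ (an in-tree rooted at $r_0$ spanning $R'$); we may assume both are subtrees of $G[R\cup S\cup T]$ and together their vertex set is $R\cup S'\cup T$ for some $S'\subseteq S$, so $|S'|\le c-1$. We then describe a procedure that, starting from the target configuration $(\{r_0\},\{r_0\},\emptyset)$, repeatedly applies inverse moves (inverse single moves sliding tokens outward along tree arcs, inverse flips splitting a token into an $F$-token and a $B$- or $D$-token along a tree path) until it reaches $(R',R',T)$; reading this backwards gives the desired forward sequence. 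The key point is that whenever we do a flip in the forward direction, the walk realizing its cost can be chosen to be a simple path, because it will be a path in the tree $T_F$ concatenated with a path in $T_B$, possibly sharing only the meeting vertex — here the structure of trees is what guarantees path-drivenness.

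Next I would set up the charging argument showing the total cost is at most $c$. Each vertex of $S'\cup\{r_0\}$ should be "newly encountered" exactly once across the whole sequence: a single move charges $1$ only when the token enters a vertex not currently occupied, and a flip of cost equal to the number of vertices on a simple path from $f$ to $b$ (or $d$), counting vertices in $F'\cup B'\cup D'$ once less, charges exactly the number of genuinely new internal vertices of that path plus the endpoints that were not already occupied. The bookkeeping is to assign to each vertex $v\in S'\cup\{r_0\}$ the unique moment it first appears under a token as we run the sequence forward, and to verify that the cost definitions make each such first appearance contribute exactly $1$ and everything else contribute $0$. Because $T_F$ and $T_B$ are trees, each vertex lies on a unique root-path in each, so it is "discovered" at a well-defined step, and no vertex is discovered twice; summing gives total cost $\le |S'| + 1 \le c$.

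The main obstacle, and the place I would spend the most care, is handling the interaction between $B$-tokens and $D$-tokens so that the invariant $D\cap B=\emptyset$ is maintained and so that the $D$-tokens are correctly absorbed. In the solution, a vertex $t\in T$ is a leaf (or internal vertex) of the backward tree only; there is no forward-tree obligation for it. So when we "unfold" the backward tree from $r_0$, at each branching vertex we must decide, for each child subtree, whether the token sliding into it should be a $B$-token or a $D$-token. The right rule is: a subtree is traced by a $B$-token iff it contains a vertex of $R'$ (equivalently, a forward-tree obligation lies below it), and by a $D$-token otherwise; this is consistent with moves (1-b)/(1-c) and with the fact that a $B$-token subsumes a $D$-token on merge. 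One must check that with this rule every inverse move is legal — in particular that flips of type (2-b) (splitting into an $F$-token and a $D$-token) are only needed when a forward obligation is present on the $f$-side, which is exactly the condition $B'\neq\emptyset$ stated in the move, and that type (2-c) is in fact never needed (foreshadowed in the text), so the produced sequence will already be of the cleaner form used later. I would also verify the easy base/termination details: that starting from $(\{r_0\},\{r_0\},\emptyset)$ and fully unfolding both trees lands exactly on $(R',R',T)$, using that $D$-tokens end precisely on $T$ and $B,F$-tokens on $R'$.

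Finally, I would note that path-drivenness of every type (2) move follows uniformly: in the forward sequence a flip corresponds to contracting a path of the form (portion of $T_F$ from some vertex up to a meeting vertex) followed by (portion of $T_B$ from the meeting vertex down to the target leaf/vertex), and since these are tree paths meeting only at the meeting vertex, their concatenation is a simple path visiting all the vertices of $F'\cup B'\cup D'$ that are being merged; hence the shortest walk in the definition of the cost is realized by a simple path, as required. This, together with the charging bound, completes the argument.
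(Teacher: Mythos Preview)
Your proposal takes a different route from the paper's proof, and as stated it has a genuine gap.

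The paper does \emph{not} fix trees $T_F$ and $T_B$ in advance and unwind them. Instead it follows Feldman and Ruhl's greedy forward construction: mark vertices as \emph{dead} once a token leaves them, maintain the invariant that every $F$-token has an alive path to $r_0$ and $r_0$ has an alive path to every $B$- and $D$-token, and move a token by a single move whenever some token is not \emph{required} by any other token. When every token is required by some other token, the Flip Lemma produces specific tokens $f$ and $b$ and a simple alive path $P$ between them such that every token requiring a vertex of $P$ lies on $P$ (Claim~3.4 in~\cite{FeldmanR06}); the flip is applied along $P$, which is what makes it path-driven. Path-drivenness thus comes from the Flip Lemma and Claim~3.4, not from any tree structure.

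Your argument for path-drivenness is the weak point. You assert that a flip corresponds to ``a portion of $T_F$ up to a meeting vertex followed by a portion of $T_B$ down from it, meeting only at that vertex,'' but nothing forces $T_F$ and $T_B$ to intersect in a single vertex along the relevant stretch; in general they can share arbitrarily many vertices, and then the concatenation is not a simple path. This is exactly the difficulty the Flip Lemma is designed to handle, and your tree-unfolding description neither specifies when a flip is triggered nor why the tokens involved all lie on one simple path. Relatedly, your charging argument (``each vertex lies on a unique root-path in each tree, so it is discovered once'') does not rule out a vertex being visited once by an $F$-token via $T_F$ and again by a $B$-token via $T_B$; the paper's dead/alive rule is precisely what prevents this double-counting and makes the cost bound work. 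Finally, note that the claim that moves of type (2-c) are never needed is a separate lemma in the paper with its own nontrivial proof using arc-minimality of the solution subgraph; it is not a byproduct of the construction here.
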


The lemma can be proved using exactly the same arguments as for the proof of Lemma 2.2 in~\cite{FeldmanR06} (ignoring the difference between $B$- and $D$-tokens).
The aim is to construct a move sequence, where all intermediate position of tokens are in $H=G[R\cup S\cup T]$. 

Since each vertex is counted each time it becomes newly occupied by a token and we want to find a move sequence of cost $c$, we cannot  afford to re-occupy a previously abandoned vertex. Specifically, we enforce the following rule:
 \begin{align}
    \text{\parbox[t]{10cm}{Once a token moves off a vertex, no other token will ever move to that vertex again. } \tag{$\star$}}\label{prop:star}
 \end{align}
\newcommand{\prostar}{(\ref{prop:star})}
A vertex is called ``dead'' once a vertex moves from it and the tokens can only move to vertices which are still ``alive''. Note specifically that a token may be standing on a vertex that is already dead.

The sequence is constructed in a greedy fashion, maintaining the following invariant: 
\begin{align}
    \text{\parbox[t]{10cm}{There are paths using only alive vertices (except for endpoints) from each vertex of $F$ to $r_0$ and from $r_0$ to each vertex of $B \cup D$.} \tag{$\ast$}}\label{prop:ast}
\end{align}
\newcommand{\proast}{(\ref{prop:ast})}
This is actually the only assumption used in the correctness proof of Feldman and Ruhl to show that there is always a move to continue with, maintaining the properties \prostar{} and \proast{} (see the proof of Lemma 3.2 and Section 4 of \cite{FeldmanR06}). Since there is always a move to continue and no token can return to a vertex it has already visited, we must reach $(\{r_0\},\{r_0\},\emptyset)$ at some point and the cost cannot exceed $c$.

Since we aim on more detailed analysis of the game graph, we repeat some notions and lemmata used in the correctness proof given by Feldman and Ruhl. 

We say that a token $t$ \emph{requires} a vertex $v$ if all legal paths for $t$ to get to $r_0$ go through $v$. Here a legal path is a path in the right direction given by token $t$, within $H$, and using only alive vertices. We will further refer to a token and to the vertex it is currently standing on exchangeably, e.g., we say that a token $t$ requires a token $t'$ if $t$ requires the vertex $t'$ is currently standing on. Note that we can move the tokens using type (1) moves as long as they are not required by any other token. If each token is required by some other token, then the following holds.

\begin{lemma}[Flip Lemma, Lemma 3.3 of \cite{FeldmanR06}]
Let the $F_0$-tokens be the $F$-tokens that are not required by any other $F$-token. Similarly, let $B_0$-tokens be the $B$- and $D$-tokens, that are not required by any other $B$- or $D$-token. Suppose that every token is required by some other token. Then there is an $F_0$-token $f$ and a $B_0$-token $b$ such that 
\begin{itemize}
 \item $f$ requires $b$, and no other $F_0$-token requires $b$, and
 \item $b$ requires $f$, and no other $B_0$-token requires $f$. 
\end{itemize}
\end{lemma}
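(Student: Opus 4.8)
The plan is to reduce the statement to the Flip Lemma of Feldman and Ruhl (Lemma~3.3 of~\cite{FeldmanR06}). The only new feature here is the presence of $D$-tokens, but for this lemma they are interchangeable with $B$-tokens: a $D$-token moves against the arcs toward $r_0$ exactly as a $B$-token does, and the predicate ``$t$ requires $v$'' is defined for a $D$-token verbatim as for a $B$-token (every legal path of $t$ to $r_0$ passes through $v$). So I would first merge $B\cup D$ into a single family of backward tokens; the $B_0$-tokens are then simply the backward tokens not required by any other backward token, and the statement becomes literally the lemma of Feldman and Ruhl. The rest of the proof follows theirs, and I sketch its shape.

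I would set up two strict partial orders. Among the $F$-tokens, put $f \succ f'$ for distinct tokens when $f$ requires $f'$; this is transitive because a legal path of $f$ to $r_0$ passes through $f'$, and its sub-path from $f'$ to $r_0$ is itself a legal path of $f'$, hence passes through $f''$, and it is antisymmetric because legal paths are simple. The symmetric relation is a strict partial order on the backward tokens. Then the $F_0$-tokens are precisely the $\succ$-maximal $F$-tokens, and likewise for the $B_0$-tokens. The second ingredient is a ``push-up'' observation, proved the same way: if $f$ is required by some backward token, then $f$ is required by some $B_0$-token, obtained by walking up $\succ$ among the backward tokens while using that ``$\gamma$ requires $\beta$'' and ``$\beta$ requires $f$'' imply ``$\gamma$ requires $f$'' for backward tokens $\gamma,\beta$. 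Now, since by hypothesis every token is required by some other token, and an $F_0$-token cannot be required by another $F$-token, every $F_0$-token is required by some $B_0$-token; symmetrically every $B_0$-token is required by some $F_0$-token. Forming the bipartite digraph on the $F_0$- and $B_0$-tokens with an arc from each token to a token that requires it, every vertex has out-degree at least one, so there is an alternating directed cycle $f_1 \to b_1 \to f_2 \to \cdots \to f_1$.

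The core of Feldman and Ruhl's argument, which I would follow, is to show that this cycle can be taken to have length two --- equivalently, that there is an $F_0$-token $f$ and a $B_0$-token $b$ with $f$ requires $b$ and $b$ requires $f$ --- and to derive the two uniqueness clauses (no other $F_0$-token requires $b$, no other $B_0$-token requires $f$). This step cannot be carried out by the partial-order manipulations alone, because ``requires'' does not transfer across token types: the tail of an $F$-token's forced path runs \emph{into} $r_0$, whereas a backward token's forced paths emanate \emph{from} $r_0$, so concatenation gives nothing. Instead one analyzes the order in which the required tokens occur along a forced path and selects the pair ``innermost'' with respect to that order, which simultaneously forces $\succ$-maximality of its two members and rules out a competing $F_0$- or $B_0$-token. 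This is the step I expect to be the main obstacle; everything before it is routine bookkeeping with the two partial orders, and once the $B$- and $D$-tokens are identified the argument of~\cite{FeldmanR06} applies without change.
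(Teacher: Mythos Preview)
Your proposal is correct and matches the paper's treatment: the paper does not reprove the Flip Lemma but simply imports it from Feldman and Ruhl, the only adaptation being that $B$- and $D$-tokens are grouped together as backward tokens---exactly the identification you make. Your additional sketch of the Feldman--Ruhl argument (the two partial orders, the bipartite cycle, and the ``innermost pair'' selection) goes beyond what the paper itself provides, but is consistent with the cited proof.
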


Let $f$ and $b$ be as in the flip lemma and $P$ be a simple path between them in $H$ using only alive vertices. Then we have the following claim.

\begin{claimx}[Claim 3.4 of \cite{FeldmanR06}]\label{cla:on_path}
All tokens that require a vertex on $P$ are on $P$ themselves.
\end{claimx}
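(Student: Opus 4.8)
The statement to prove is Claim 3.4 of Feldman--Ruhl (restated here as \autoref{cla:on_path}): with $f$ and $b$ as in the Flip Lemma and $P$ a simple alive path from $f$ to $b$ in $H$, all tokens that require a vertex on $P$ are themselves on $P$.

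\medskip

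The plan is to argue by contradiction, exploiting the defining property of $f$ and $b$ from the Flip Lemma together with the transitivity of the ``requires'' relation. First I would fix the direction conventions: $f$ is an $F_0$-token, so its legal paths run forward toward $r_0$; $b$ is a $B_0$-token, so its legal paths run backward toward $r_0$; and $P$ is a simple alive path oriented from $f$ to $b$ — concatenating the forward legal path from $f$ to $r_0$ with the reverse of the backward legal path from $b$ to $r_0$ gives such a $P$ (indeed the Flip Lemma guarantees $f$ requires $b$ and $b$ requires $f$, so there is an alive path through $f$ and $b$ in each direction, and any simple alive $f$-to-$b$ path will do). Now suppose some token $t$ requires a vertex $v$ lying on $P$ but $t$ itself is not on $P$. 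I would split into cases according to whether $t$ is an $F$-type or a $B/D$-type token.

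\medskip

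For the first case, suppose $t$ is an $F$-token requiring $v \in P$. Every legal forward path from $t$ to $r_0$ passes through $v$; from $v$ one can continue along $P$ to $b$ and then, since $b$ requires $f$ via an alive path, reach $r_0$ — so in particular every legal path of $t$ reaches $v$, and the portion of $P$ from $v$ onward to $b$ is alive, hence $t$ requires $b$ as well (any legal $t$-path goes through $v$, and from $v$ it must still reach $r_0$; I would need to check that $v$ itself forces passage through $b$, which follows because $b$ separates $v$ from $r_0$ in the alive subgraph — this is where the Flip-Lemma minimality of $b$ among $B_0$-tokens enters). But then $t$, being an $F$-token whose forward closure contains $b$, would be an $F_0$-token (or dominated by one) that requires $b$, contradicting the Flip Lemma clause ``no other $F_0$-token requires $b$'' — unless $t$ is on $P$. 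Symmetrically, if $t$ is a $B$- or $D$-token requiring $v \in P$, I would trace backward from $v$ along $P$ to $f$, conclude $t$ requires $f$, and contradict ``no other $B_0$-token requires $f$.'' The $D$-token subcase is handled identically to the $B$-token subcase since $D$-tokens move exactly like $B$-tokens and the Flip Lemma already lumps them together in $B_0$.

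\medskip

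The key structural fact I would lean on throughout is that ``requires'' behaves like a separation/domination order: if $t$ requires $v$ and every alive legal continuation from $v$ must pass through $w$, then $t$ requires $w$. Establishing this cleanly — essentially that the set of vertices a token requires forms a chain ordered by their position toward $r_0$, and that a vertex on the alive path $P$ between $f$ and $b$ is ``downstream'' of $f$ and ``upstream'' of $b$ in the relevant direction — is the main obstacle; once it is in place, the contradictions with the two Flip-Lemma uniqueness clauses are immediate. Since this is verbatim Claim 3.4 of \cite{FeldmanR06} and the excerpt explicitly says it reuses the Feldman--Ruhl machinery, I would simply cite their proof, noting that the presence of $D$-tokens changes nothing because they are subsumed under $B_0$ in the Flip Lemma statement.
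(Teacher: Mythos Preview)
Your bottom line---cite Feldman and Ruhl and note that $D$-tokens are already absorbed into $B_0$ in the Flip Lemma---is exactly what the paper does; it states the claim as ``Claim~3.4 of \cite{FeldmanR06}'' and gives no independent proof.

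Your sketched argument has the right skeleton but one gap worth flagging. From ``$t$ requires $b$'' you want a contradiction with the clause ``no other $F_0$-token requires $b$'', so you pass to an $F_0$-token dominating $t$. But that $F_0$-token could be $f$ itself, and then the uniqueness clause is not violated. This case does close: if $f$ requires $t$, take the prefix of $P$ from $f$ to $v$ and append any simple alive path from $v$ to $r_0$; the prefix misses $t$ since $t\notin P$, and the suffix cannot contain $t$ either (else its tail from $t$ to $r_0$ would avoid $v$, contradicting that $t$ requires $v$), so you get an alive $f$--$r_0$ path missing $t$, contradicting that $f$ requires $t$. The symmetric issue arises on the $B/D$ side with $b$. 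One more small correction: the reason $b$ separates $v$ from $r_0$ in the forward alive graph is not any ``minimality of $b$ among $B_0$-tokens'' but simply that $f$ requires $b$ combined with the alive $f$--$v$ prefix of $P$.
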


If $F'$, $B'$, and $D'$ are the sets of $F$-, $B$-, and $D$-tokens on $P$, respectively, then we can apply type (2) move, preserving the property \proast. 

Since this is the only place where type (2) moves are used in the construction of the move sequence, we may assume that whenever a move of type (2) is used, it is used on the specific vertices $f$ and $b$ as selected by the flip lemma. 
Note that in this case the shortest walk from $f$ to $b$ going through all vertices in $F' \cup B' \cup D'$ is actually a simple path, i.e., all type (2) moves are path-driven. However, since it is complicated to test for the existence of such a path, following Feldman and Ruhl, we introduced more moves, which does not hurt the construction.

Our aim now is to show that the moves of type (2-c) can be omitted without affecting the correspondence between the game and the instance of $q$-RST-P.
\begin{lemma}
\label{lem:bez_2c}
If there is a set $S \subseteq V$ of size at most $c-1$ such that in $G[R \cup S \cup T]$ there is a directed path from $r$ to $t$ for every $r \in R$ and every $t \in R \cup T$, then there is a move sequence from $(R',R',T)$ to $(\{r_0\},\{r_0\},\emptyset)$ of total cost at most $c$ in which all type (2) moves are path-driven and, moreover, there are no moves of type (2-c). 
%Furthermore, in this sequence of moves, whenever after some move an $F$-token and a $D$-token sit together on a vertex $v \in V$, then $v$ is required by some $B$-token. 
\end{lemma}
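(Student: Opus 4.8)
The plan is to start from a move sequence as guaranteed by \autoref{lem:cautious} — of total cost at most $c$, with every type (2) move path-driven — and to arrange that it additionally uses no type (2-c) move. Recall how the sequence of \autoref{lem:cautious} is produced, following Feldman and Ruhl: one repeatedly advances a non-required token by a type (1) move, and only once every token is required does one perform a flip, taking for $f$ and $b$ the pair delivered by the Flip Lemma and for the walk a shortest simple path $P$ from $f$ to $b$ inside $H=G[R\cup S\cup T]$ on alive vertices (legitimate by \autoref{cla:on_path}). A key point is that, as long as \prostar{} and \proast{} hold and all token positions stay in $H$, the total cost is automatically at most $c$: a move of cost $\gamma$ meets exactly $\gamma$ previously unseen vertices (an alive vertex carrying no token has never been visited, by \prostar), so the costs sum to at most $|S|+1=c$. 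Hence it suffices to exhibit \emph{any} valid such construction that never resorts to a type (2-c) move. A type (2-c) move is called for in exactly one sub-case of the flip step: when $b$ is a $D$-token $d$ and the chosen path $P$ contains no $B$-token. (If $q=1$ there are no $F$-tokens, hence no type (2) move at all is possible and there is nothing to prove; so assume $q\ge 2$, which guarantees $|B|\ge 1$ throughout.)

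The guiding idea is that a $D$-token is dominated by a $B$-token: comparing moves (1-b) and (1-c), a $B$-token can reach by single moves every vertex a $D$-token can; the target configuration carries a $B$-token (not a $D$-token) on $r_0$; and the flip move (2-a), available for $B$-tokens, applies in every situation where (2-c) would be invoked for a $D$-token. Consequently it never hurts to keep $D$-tokens merged into $B$-tokens, and it never hurts to handle $D$-tokens with priority. We therefore refine the greedy rule used to build the sequence: while any $D$-token is not required, advance it by (1-c) moves — merging it into a $B$-token the instant the two meet — before touching other tokens. The claim is that with this refined rule the construction never reaches a configuration whose only available progress is a type (2-c) move; the resulting sequence is valid, hence of cost at most $c$ by the previous paragraph, and has every type (2) move path-driven exactly as in \autoref{lem:cautious}, which proves the lemma.

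The heart of the proof, and the step I expect to be the main obstacle, is verifying that claim, i.e. that the Flip Lemma — applied in a configuration reached under the refined rule — never returns a pair $(f,d)$ with $d$ a $D$-token together with a shortest simple $f$--$d$ path in $H$ on alive vertices that avoids every $B$-token. The intended argument re-examines the proof of the Flip Lemma with the $B$/$D$ bookkeeping made explicit: since no $B$-token lies on $P$, \autoref{cla:on_path} gives that no $B$-token requires any vertex of $P$, in particular none requires $f$; combining this with the transitivity of the ``requires'' relation and with the presence of at least one $B$-token (as $q\ge 2$), one argues that the requirement structure forces either a $B$-token into the role of $b$, or a $B$-token onto some shortest $f$--$d$ path, so that a type (2-a) or type (2-b) flip can be used in place of the type (2-c) one. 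In the latter case one must still check that this alternative flip preserves \prostar{} and \proast{} — this is where one recycles Feldman and Ruhl's original correctness analysis, now carrying the $B$/$D$ distinction through — and it is here, rather than in the cost bookkeeping, that the delicate work lies.
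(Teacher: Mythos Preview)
Your plan diverges from the paper's and leaves a genuine gap at the point you yourself flag as delicate. The paper does not try to steer the greedy construction away from (2-c) flips by reordering single moves or by re-proving the Flip Lemma. Instead it passes to an \emph{arc-minimal} subgraph $H^*\subseteq G[R\cup S\cup T]$ that still provides all required paths, runs the unmodified greedy construction of \autoref{lem:cautious} inside $H^*$, and derives a contradiction from the assumption that a (2-c) flip is ever invoked. Concretely: if such a flip on $f$ and $d$ with path $P$ carrying no $B$-token is about to occur, trace $f$ back along dead vertices to its origin $s\in R'$ and look at the companion $B$-token $b_s$ that also started at $s$; since $b_s\notin P$, by \autoref{cla:on_path} it does not require $f$, so one can assemble in $H^*$ a path from $r_0$ to $s$ avoiding $f$. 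Together with the alive paths witnessing that $f$ and $d$ require each other, this shows that the last arc $a$ on an alive $r_0$--$f$ path is redundant: every required path in $H^*$ that is forced through $f$ can be rerouted in $H^*\setminus\{a\}$, contradicting minimality. The arc-minimality of $H^*$ is what does the work.

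By contrast, your rule ``advance unrequired $D$-tokens first'' is inert precisely when it matters: flips are invoked only once \emph{every} token is required, so at that moment no $D$-token can be advanced, and you give no argument that earlier prioritisation reshapes the eventual flip configuration favourably. The claim you defer---that transitivity of ``requires'' together with the mere existence of some $B$-token forces a $B$-token into the role of $b$ or onto an $f$--$d$ path---is exactly the crux, and I do not see how it follows; nothing prevents all $B$-tokens from sitting in a part of the requirement structure disjoint from $P$. What the paper exploits, and your sketch lacks, is the specific link between $f$ and the particular $B$-token $b_s$ sharing its origin in $R'$, together with the global minimality of $H^*$ to convert ``$b_s$ does not require $f$'' into a removable arc. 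Without an analogue of that step the plan does not close.
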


\begin{proof}
 Let $H= G[R \cup S \cup T]$ and let $H^*$ be an edge minimal subgraph of $H$ such that there is a directed path from $r$ to $t$ for every $r \in R$ and every $t \in R \cup T$. Let us construct the sequence greedily as in the proof of Lemma~\ref{lem:cautious} and suppose for contradiction that at some point a type (2-c) is to be applied on $f$, $d$, $F'$, and $D'$. That is, there is a path $P$ from $f$ to $d$ in $H^*$, $F'$ and $D'$ are the sets of $F$- and $D$-tokens on $P$, respectively, and there are no $B$-tokens on $P$.
 
 Now suppose that one of the tokens merged into the token $f$ started its tour on a vertex $s$ of $R'$, i.e., there is a path in $H^*$ from $s$ to $f$ using only dead vertices (except for $f$) (see also Fig.~\ref{fig:2c}). Let $b_s$ be the $B$-token that also started its tour on $s$. Since $b_s$ is not on $P$ and, hence, does not require $f$ by \autoref{cla:on_path}, there is a path from $r_0$ to $s$ in $H^*$ that avoids $f$. Note specifically that $f \neq s$, as this would mean that $s$ is required by $d$, $b_s$ stayed on $s$ and hence on $P$. Since $f$ requires $d$, there is a path of alive vertices from $d$ to $r_0$. Similarly, as $d$ requires $f$, there is a path of alive vertices from $r_0$ to $f$. Let $a$ be the last arc on that path. Note that it connects two alive vertices. 
 
 \begin{figure}[t]
 \begin{center}
 \includegraphics[width=.6\textwidth]{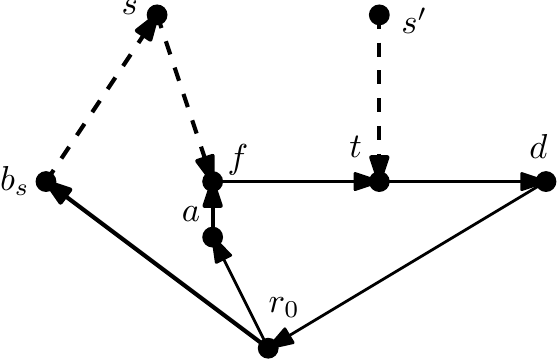}
 \end{center}
  \caption{Illustration of the situation in the proof of Lemma~\ref{lem:bez_2c}. The arrows denote paths and those dashed are formed by dead vertices.}\label{fig:2c}
 \end{figure}

 We claim that in $\hat{H} = H^* \setminus \{a\}$ there is also a directed path from $r$ to $t$ for every $r \in R$ and every $t \in R \cup T$, contradicting the minimality of $H^*$. In fact we will only show that there is a path in $\hat{H}$ from $r_0$ to each $t \in R' \cup T$ and from each $r$ in $R'$ to $r_0$. Such path are definitely present in $H^*$. If they avoid $f$, then they are still present in $\hat{H}$. If some of the paths cannot avoid $f$ in $\hat{H}$, then the corresponding token $t$ in the game for $H^*$ starting at vertex $s'$ requires $f$. It follows that $t$ lies on $P$ and there is a path between $s'$ and $t$ in the direction appropriate for the token formed by dead vertices (except for $t$). We know that $t$ is not a $B$-token, as there are no such tokens on $P$. If it is an $F$-token, then we can obtain a path from $t$ to $r_0$ avoiding $a$ by concatenating the part of $P$ to $d$ and a path from $d$ to $r_0$. If $t$ is a $D$-token, then we can obtain a path from $r_0$ to $t$ in $H^*$ by shortcutting the walk obtained by concatenating the path from $r_0$ to $s$ (avoiding $f$), the path from $s$ to $f$ (dead), and a part of $P$ (not containing $a$).
 
 Hence, indeed, if $H^*$ is minimal, then every type (2) move is of type (2-a) or (2-b).
\end{proof}

\newcommand{\fork}{{\cal Y}}
\autoref{lem:bez_2c} shows, that there are no flips, that would result in a $D$-token on a new position. We aim to show that $D$-tokens interact with the other tokens even less. 
Namely, if a $D$-token meets with an $F$-token, then it stays on place until it is merged with some $B$-token.
We show that by making a side step and considering the game on a modified graph.
To this end we need a following definition (see also Fig.~\ref{fig:forking}).

\begin{definition}\label{def:forking}
 Let $G =(V,A)$ be a directed graph. A \emph{forking} of $G$ is the directed graph $\fork(G)=(\hat{V},\hat{A})$, where 
 \begin{align*}
  \hat{V}&=V \times \{0,1\} \text{ and}\\
  \hat{A}&=\{((u,0),(v,1))\mid (u,v) \in A\} \cup \{((v,1),(v,0))\mid v \in V\}.
 \end{align*}
\end{definition}

\begin{figure}[t]
\begin{center}
\includegraphics[width=.8\textwidth]{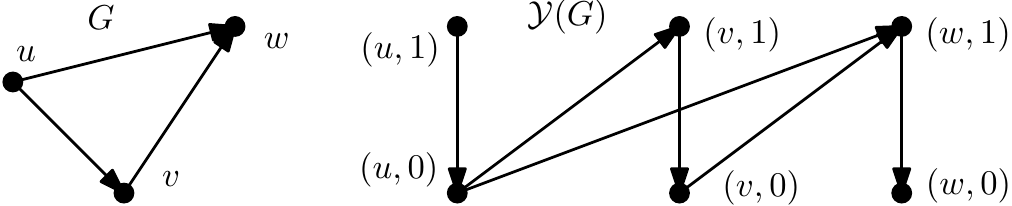}
\end{center}
 \caption{Illustration of the Definition~\ref{def:forking}---graph $G$ on left and $\fork(G)$ on right.}\label{fig:forking}
\end{figure}

There is a natural correspondence between paths in $G$ and paths in $\fork(G)$, namely a path $P$ in $G$ correspond to a path $\fork(P)$ in $\fork(G)$. Let us set $\hat{R}=\{(r,0)\mid r \in R\}$ and $\hat{T}=\{(t,0)\mid t \in T\}$. We relate the solution in the forking of $G$ to solution in $G$ by the following easy lemma.

\begin{lemma}
 Let $G=(V,A)$ be a directed graph, $R,T \subseteq V$, $R \cap T = \emptyset$, and $\fork(G)=(\hat{V},\hat{A})$ its forking. There is a solution $S \subseteq V \setminus (R \cup T)$ of size at most $k$ for $(G,R,T)$ if and only if there is a solution $\hat{S} \subseteq \hat{V}$ of size at most $2k + |R \cup T|$ for $(\fork(G),\hat{R},\hat{T})$.
\end{lemma}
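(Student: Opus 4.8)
The plan is to prove the two implications through the single observation that the forking replaces every arc $(u,v)$ of $G$ by the directed path $(u,0)\to(v,1)\to(v,0)$, so that a directed path $P=v_0v_1\cdots v_\ell$ of $G$ corresponds to the directed path $\fork(P)=(v_0,0)(v_1,1)(v_1,0)\cdots(v_\ell,1)(v_\ell,0)$ of $\fork(G)$ on vertex set $\{(v_i,0):i\le\ell\}\cup\{(v_i,1):1\le i\le\ell\}$, and conversely (since the only arc out of $(v,1)$ enters $(v,0)$ and the only arc into $(v,0)$ leaves $(v,1)$) every directed path of $\fork(G)$ both of whose endpoints lie in $V\times\{0\}$ is of this form. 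For the implication from $G$ to $\fork(G)$ I would take a solution $S\subseteq V\setminus(R\cup T)$ with $|S|\le k$, put $W=R\cup S\cup T$, and set $\hat S=(W\times\{0,1\})\setminus(\hat R\cup\hat T)$; then $\hat R\cup\hat S\cup\hat T=W\times\{0,1\}$, the $\fork$-image of a path from $r$ to $t$ in $G[W]$ stays inside this set and witnesses the corresponding demand of $(\fork(G),\hat R,\hat T)$, and $|\hat S|=2|W|-|R\cup T|=|R\cup T|+2|S|\le 2k+|R\cup T|$.

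For the implication from $\fork(G)$ to $G$, given a solution $\hat S$ with $|\hat S|\le 2k+|R\cup T|$ I would first replace it by an inclusion-minimal solution contained in it, which only decreases the size. Writing $U=\hat R\cup\hat S\cup\hat T$ and $B=\{v:(v,1)\in U\}$, $C=\{v:(v,0)\in U\}$ (so $|U|=|B|+|C|$ and, by minimality, $\hat S\cap(\hat R\cup\hat T)=\emptyset$, hence $|U|=|\hat S|+|R\cup T|$), the heart of the argument is the structural claim $B=C$. One inclusion, $B\subseteq C$, follows because a vertex $(v,1)\in U$ with $(v,0)\notin U$ would be an out-degree-zero, non-endpoint vertex of $\fork(G)[U]$ and hence useless, contradicting minimality. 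For $C\subseteq B$, a vertex $(v,0)\in U$ with $(v,1)\notin U$ has in-degree zero, so by minimality it is the start of a demanded path, i.e.\ $v\in R$; but in $q$-RST-P every vertex of $R\cup T$ is also a sink, so $(v,0)$ must receive a demanded path, which can only arrive through the arc $(v,1)\to(v,0)$, forcing $(v,1)\in U$. Then I would set $S=C\setminus(R\cup T)$; since $R\cup T\subseteq C$ we get $R\cup S\cup T=C$, each demanded $(\fork(G),\hat R,\hat T)$-path lies in $\fork(G)[U]$ and projects to a path of $G[C]$ realizing the corresponding demand of $(G,R,T)$, so $S$ is a solution, and $|\hat S|=|U|-|R\cup T|=2|C|-|R\cup T|=2|S|+|R\cup T|$, whence $|S|=(|\hat S|-|R\cup T|)/2\le k$.

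I expect the structural claim $B=C$ to be the main obstacle, and within it the one genuine subtlety is the corner case $|R|=1$: then the single root $r_0$ need not be a sink (the demand from $r_0$ to $r_0$ is served by the trivial path), so the argument only yields $C\subseteq B\cup\{r_0\}$. In that case, using $T\subseteq B$ and $r_0\notin B$, one gets $S=B\setminus T$ and $|\hat S|=2|B|-|T|=2|S|+|T|$, hence $|S|\le k+\tfrac12$, and since $|S|$ is an integer this still gives $|S|\le k$. (For $q\ge2$ no exception arises; specialising to $T=\emptyset$ one recovers the corresponding statement for \textsc{SCSS}.)
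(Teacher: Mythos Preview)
Your argument is correct and follows essentially the same route as the paper: in both proofs the forward direction takes $\hat S=(R\cup S\cup T)\times\{0,1\}\setminus(\hat R\cup\hat T)$, and the backward direction prunes $\hat S$ by the source/sink observation (a lone $(v,1)$ is a sink, a lone $(v,0)$ is a source) to force $\hat R\cup\hat S\cup\hat T$ into the form $W\times\{0,1\}$, then projects. The paper phrases the pruning as ``remove singletons over $v\notin R\cup T$, then observe $(R\cup T)\times\{1\}\subseteq\hat S$ because every $(v,0)\in\hat R\cup\hat T$ must be reached'', while you phrase it via inclusion-minimality and the identity $B=C$; these are the same mechanism.

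One remark: your handling of the $q=1$ corner case is more careful than the paper's. The paper asserts $(R\cup T)\times\{1\}\subseteq\hat S$ on the grounds that every vertex of $\hat R\cup\hat T$ receives a path, but for $|R|=1$ the only path demanded into $(r_0,0)$ is the trivial one, so $(r_0,1)$ need not lie in $\hat S$. Your observation that then $C\subseteq B\cup\{r_0\}$ and the integrality trick $|S|\le k+\tfrac12\Rightarrow|S|\le k$ closes this gap cleanly. (When $r_0\in B$ the standard count already gives $|S|\le k$, so your worst-case analysis with $r_0\notin B$ suffices.)
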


\begin{proof}
 If $S$ is a solution for $(G,R,T)$ of size $k$, then let $\hat{S}= (S\cup R\cup T) \times \{0,1\} \setminus (\hat{R} \cup \hat{T})$. It is easy to verify, that $\hat{S}$ is a solution for $(\fork(G),\hat{R},\hat{T})$ of size at most $2k + |R \cup T|$.
 
 For the other direction, suppose $\hat{S}$ is a solution for $(\fork(G),\hat{R},\hat{T})$ of size at most $2k + |R \cup T|$.
 If there is $v \in V \setminus (R\cup T)$ such that $|\hat{S} \cap (\{v\} \times \{0,1\})|=1$, then the vertex in the intersection is either a sink or a source, as there are no arcs from $(v,1)$ and no arcs to $(v,0)$ except for $((v,1),(v,0))$. In both cases $\hat{S} \setminus \{(v,0),(v,1)\}$ is also a solution for $(\fork(G),\hat{R},\hat{T})$. Also note that since there is a path to every vertex in $\hat{R} \cup \hat{T}$ in $\fork(G)[\hat{R}\cup \hat{S} \cup \hat{T}]$, we have $(R \cup T) \times \{1\} \subseteq \hat{S}$. Hence we can assume that $\hat{R}\cup \hat{S} \cup \hat{T} = (R \cup S\cup T) \times \{0,1\}$ for some $S \subseteq V$ of size at most $k$. It is easy to verify that this $S$ forms a solution for $(G,R,T)$.
\end{proof}

Now consider the token game for the forking of $G$, $\hat{R}$, and  $\hat{T}$ and suppose that there is a solution $\hat{S}$ for $(\fork(G),\hat{R},\hat{T})$ of size at most $2k + |R \cup T|$. By \autoref{lem:bez_2c}, there is a sequence $M$ of moves from $(\hat{R} \setminus \{(r_0,0)\}, \hat{R} \setminus \{(r_0,0)\}, \hat{T})$ to $(\{(r_0,0)\},\{(r_0,0)\}, \emptyset)$ of total cost $2k + |R \cup T|+1$ without moves of type (2-c). 

We want to show that the following claim.
\begin{claimx}
 If there is a vertex $v \in V$ such that at some point in the move sequence a $D$-token and an $F$-token each occupy a vertex in $\{v\} \times \{0,1\}$, then the next move touching the $D$-token is either of type (2), or single-move (type (1)) of some $B$-token merging with the $D$-token.
\end{claimx}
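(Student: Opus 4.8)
The plan is to pin down, using the rigid structure of the forking $\fork(G)$, exactly where a $D$-token and an $F$-token can simultaneously occupy the two copies $(v,0),(v,1)$ of a single vertex $v\in V$. Write $\alpha=((v,1),(v,0))$ for the unique arc joining these copies, and recall that in $\fork(G)$ the vertex $(v,0)$ has $\alpha$ as its \emph{only} incoming arc and $(v,1)$ has $\alpha$ as its \emph{only} outgoing arc. Since a $D$-token moves against the arcs, a $D$-token standing on $(v,1)$ must have reached it by traversing $\alpha$ backwards from $(v,0)$ (there is no other arc leaving $(v,1)$), so $(v,0)$ is dead at that moment; and the only backward single move available to a $D$-token standing on $(v,0)$ goes to $(v,1)$ (there is no other arc entering $(v,0)$).

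First I would rule out that the $D$-token $d$ of the claim sits on $(v,1)$. If it did, $(v,0)$ would be dead, and then the co-located $F$-token $f$ could sit neither on $(v,1)$ nor on $(v,0)$. In the first subcase $f$'s only outgoing arc is $\alpha$, which leads to the dead vertex $(v,0)$, so $f$ can never make a legal single move; nor can $f$ be moved by a flip of $M$, since the alive simple path underlying such a flip runs from an $F$-token to a $B$- or $D$-token, so $(v,1)$ (carrying an $F$-token) would be its start or an interior vertex, and either way the path would have to leave $(v,1)$ along $\alpha$ into the dead vertex $(v,0)$; hence $f$ would remain on $(v,1)$ forever, contradicting that $M$ ends in $(\{(r_0,0)\},\{(r_0,0)\},\emptyset)$. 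In the second subcase $f$ must have entered $(v,0)$ along $\alpha$, because $\alpha$ is its only incoming arc and $(v,0)$ is not the start of an $F$-token: such a start lies in $\hat R\setminus\{(r_0,0)\}$ and carries a $B$-token initially, so by the maintained invariant $D\cap B=\emptyset$ (and since that vertex dies once the $B$-token leaves) $d$ could never be there, and $f$ does not start at $(r_0,0)$ either. But then a short bookkeeping of the death times of $(v,0)$ and $(v,1)$ relative to the traversals of $\alpha$ by $d$ and by $f$, using \prostar, shows that $f$'s entry into $(v,0)$ and $d$'s departure from $(v,0)$ cannot be consistently ordered. Hence $d$ sits on $(v,0)$.

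With $d$ on $(v,0)$ there are two subcases for $f$. If $f$ also sits on $(v,0)$, then by the same argument as above $f$ entered it along $\alpha$, so $(v,1)$ is dead; but the only backward single move of $d$ out of $(v,0)$ goes to $(v,1)$, which is forbidden by \prostar. If $f$ sits on $(v,1)$, then $(v,0)$ is alive (otherwise $f$ would be stuck, contradicting the validity of $M$), every legal path for $f$ towards $r_0$ begins with $\alpha$ and hence passes through $(v,0)=d$'s vertex, so $f$ requires $d$ and no type~(1) move of $d$ is legal; moreover $f$'s only possible move is the single move along $\alpha$ onto $(v,0)$, after which we are in the previous subcase with $(v,1)$ now dead, and no move of any other token can resurrect $(v,1)$ or create a new backward option for $d$ out of $(v,0)$. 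Thus no move of $M$ that touches $d$ can be a type~(1-c) move of $d$. Since the only moves touching $d$ are a type~(1-c) move of $d$ (just excluded), a type~(1-b) single move of a $B$-token onto $(v,0)$ (merging $d$ into it), and a flip having $d$ in its selected set (removing $d$) --- any other type~(1) move onto $(v,0)$ either leaves $d$ in place or is forbidden by \prostar{} --- the first move touching $d$ is a flip (type~(2)) or a single move of a $B$-token merging with $d$, as the claim asserts.

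The step I expect to be most delicate is the chronological argument excluding ``$d$ on $(v,1)$ and $f$ on $(v,0)$'': one has to track precisely when the two copies of $v$ die relative to the two traversals of $\alpha$, and to verify that the degenerate possibilities --- $v=r_0$, or $d$'s current vertex being its own start vertex $(t,0)\in\hat T$ --- do not escape the argument. Everything else follows directly from the in/out-degree-one structure of the forking together with the invariants \prostar{} and \proast{} already supplied by the construction of $M$.
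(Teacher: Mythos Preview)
Your proof is correct and follows essentially the same approach as the paper: a case analysis on which of $(v,0),(v,1)$ the two tokens occupy, exploiting that in $\fork(G)$ the arc $\alpha=((v,1),(v,0))$ is the unique in-arc of $(v,0)$ and the unique out-arc of $(v,1)$, together with property~$(\star)$. The paper is more terse---it disposes of the same-vertex case and the ``$D$ on $(v,1)$'' case as outright impossible and then handles the remaining configuration with the mutual-requirement observation---whereas you organise by the $D$-token's position and argue more explicitly about starting positions and the chronological ``bookkeeping,'' but the content is the same.
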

%if a $D$-token (almost) meets with an $F$-token in this move sequence,  
\begin{proof}
First of all if the tokens actually meet on the same vertex, say $(v,i)$, then at least one of the tokens must have used the arc $((v,1),(v,0))$ and the vertex $(v,1-i)$ is dead. This, however, contradicts the property \proast{} for the other token. Now suppose there is an $F$-token on $(v,1)$ and a $D$-token on $(v,0)$. Since a type (2-a) or (2-b) moves do not place $D$-tokens on vertices where they were not previously, the previous move touching the $D$-token was of type (1). Therefore, before the move, the $D$-token was on $(v,1)$. If the $F$-token was not there before that move, then the property \prostar{} is violated, otherwise they meet at the vertex $(v,1)$ which is not possible as we have already shown. Finally, if there is an $F$-token on $(v,0)$ and a $D$-token on $(v,1)$, then the tokens require each other and, therefore, no single move can be applied to any of them. Thus the claim follows.
\end{proof}

We have shown that there is an optimal move sequence from $(\hat{R} \setminus \{(r_0,0)\}, \hat{R} \setminus \{(r_0,0)\}, \hat{T})$ to $(\{(r_0,0)\},\{(r_0,0)\}, \emptyset)$ with the property given by the claim. Now we want to show, that this move sequence can be translated to an optimal move sequence from $(R\setminus \{r_0\},R\setminus \{r_0\},T)$ to $(\{r_0\},\{r_0\},\emptyset)$ in the original graph, with the property given by the claim. 

We achieve that by simply projecting everything to the first coordinate. I.e., let us replace single moves for arcs $((u,0),(v,1))$ by single moves for the arc $(u,v)$ and omit single moves for arcs $((v,1),(v,0))$. If there is a flip move with the path being just the arc $((v,1),(v,0))$ for some $v$, then we just omit it. If the path is longer, we replace the flip by a flip on the projection of the path and with the projections of the sets $F'$, $B'$, $D'$. 

We should verify that this produces a valid move sequence. It is enough to show that the property \prostar{} is not violated, the property \proast{} then follows. Suppose that some token leaves the vertex $v$ before some other vertex enters it. But this would mean that in $\fork(G)$, a token leaves $(v,i)$ before some other token enters $(v,1-i)$. However, as each token has to visit both vertices $(v,i)$ and $(v, 1-i)$ this would violate the property \prostar{} also in the graph $\fork(G)$.

Finally, note that the produced sequence has still the property that it contains no (2-c) moves, and, whenever an $F$-token meets with a $D$-token on a vertex, then the next move touching the $D$-token is either of type (2), or single-move (type (1-b)) of some $B$-token merging with the $D$-token. This is not affected by omitting some of the flips as there are no moves of type (2-c).
In other words, the vertex must stay alive until the $D$-token is touched and, hence, the $F$-token will not  be touched before the $D$-token is touched, except possibly by type (1-a) moves of other $F$-tokens merging with the current one.

Move sequences with these properties allow us to postpone the moves of the $D$-tokens just before the move in which these tokens are merged with some $B$-token. This is a crucial property we use in the accelerated game. 

We formulate the obtained results as a lemma.

\begin{lemma}\label{lem:krasne_sekvence}
If there is a solution $S \subseteq V$ of size at most $c-1$ for $(G,R,T)$, then there is a move sequence from $(R',R',T)$ to $(\{r_0\},\{r_0\},\emptyset)$ of total cost at most $c$ in which all type (2) moves are path-driven and, moreover, there are no moves of type (2-c). 
Furthermore, in this sequence of moves, whenever after some move an $F$-token and a $D$-token sit together on a vertex $v \in V$, then the next move touching the $D$-token is either of type (2), or single-move (type (1-b)) of some $B$-token merging with the $D$-token.
\end{lemma}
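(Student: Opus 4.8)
This statement repackages the development of the preceding pages, so the proof consists of assembling the forking construction, \autoref{lem:bez_2c}, the Claim, and the projection argument. Concretely, suppose $S\subseteq V$ is a solution for $(G,R,T)$ of size at most $c-1$. The first step is to pass to the forking: by the lemma relating solutions of $(G,R,T)$ and $(\fork(G),\hat R,\hat T)$, the set $\hat S=(R\cup S\cup T)\times\{0,1\}\setminus(\hat R\cup\hat T)$ is a solution for $(\fork(G),\hat R,\hat T)$ of size at most $2(c-1)+|R\cup T|$. Applying \autoref{lem:bez_2c} inside $\fork(G)$ produces a move sequence $M$ from $(\hat R\setminus\{(r_0,0)\},\hat R\setminus\{(r_0,0)\},\hat T)$ to $(\{(r_0,0)\},\{(r_0,0)\},\emptyset)$ of total cost at most $2(c-1)+|R\cup T|+1$ in which every type (2) move is path-driven and no move is of type (2-c). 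By the Claim proved above, $M$ moreover has the property that once a $D$-token and an $F$-token occupy the two copies $\{v\}\times\{0,1\}$ of a common base vertex, the next move touching that $D$-token is of type (2) or a single move of a $B$-token merging into it.

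The second step is to project $M$ to the first coordinate, exactly as described before the statement: a single move along $((u,0),(v,1))$ becomes a single move along $(u,v)$; a single move along $((v,1),(v,0))$ is deleted; a flip whose path is the single arc $((v,1),(v,0))$ is deleted; and every other (path-driven) flip is replaced by the flip along the projected path, with $F'$, $B'$, $D'$ projected. Here one uses that a simple path in $\fork(G)$ has base-vertex sequence of the shape $v_1,v_2,v_2,v_3,v_3,\dots$, so contracting the forced immediate repetitions yields a simple path in $G$; thus the projected flips are again path-driven, and since projection never creates a $D$-token on a base vertex it did not previously occupy, no type (2-c) move appears. One then checks that the projected sequence is a legal sequence from $(R',R',T)$ to $(\{r_0\},\{r_0\},\emptyset)$ in the cautious token game on $G$. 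As in the discussion above, it suffices to verify that property \prostar{} is maintained -- \proast{} then follows from the Feldman--Ruhl invariant argument -- and \prostar{} cannot be newly violated, because every token visits \emph{both} copies of a base vertex whenever it visits that vertex, so a violation in $G$ would already occur in $M$.

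The third step transfers the $F$--$D$ property and the cost bound. For the former: if after a projected move an $F$-token and a $D$-token share a base vertex $v$, then at the corresponding point in $M$ the two tokens occupy vertices of $\{v\}\times\{0,1\}$, so the Claim applies; since $M$ has no type (2-c) moves, deleting some flips during projection cannot insert an earlier move touching that $D$-token, and a $B$-token merging a $D$-token by a single move is necessarily of type (1-b). For the cost, the clean way is to note that (i) the tokens in $M$, hence in the projection, stay inside the base set $R\cup S\cup T$ of $\fork(G)[\hat R\cup\hat S\cup\hat T]$; (ii) by property \prostar{} no base vertex is encountered twice; and (iii) since all flips are path-driven, the cost of each move of the projected sequence equals the number of base vertices it newly occupies. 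Summing, the total cost equals the number of newly occupied base vertices, which lie in $(R\cup S\cup T)\setminus(R'\cup T)=\{r_0\}\cup S$, of size at most $c$. I expect this last point -- making the move-by-move cost accounting for the projected flips precise, so that the doubling introduced by the forking is exactly cancelled -- to be the part requiring the most care, although it ultimately rests on the three observations just listed.
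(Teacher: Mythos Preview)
Your proposal is correct and follows essentially the same route as the paper: pass to the forking, apply \autoref{lem:bez_2c} there, invoke the Claim about $F$--$D$ coincidences, and project the resulting sequence back to $G$, checking that property \prostar{} (and hence validity and the absence of (2-c) moves) survives. You are in fact more explicit than the paper on the cost bound for the projected sequence---the paper simply calls the result ``an optimal move sequence'' without spelling out how the doubling from the forking is undone---so your three-point accounting via \prostar{} and containment in $R\cup S\cup T$ is a genuine addition rather than a deviation.
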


\subsection{Accelerated Token Game}\label{sec:accelerated}
In the accelerated game the $D$-tokens stay at their places until the move in which they should be merged with a $B$-token. The moves then also include the costs of moving the $D$-tokens from their original places to the vertex where they get merged with the $B$-token. Therefore, we now represent the positions of the $D$-tokens only as subsets of $T$.

To define the costs of the moves we use the following notion. Let $ST(r,X)$ be the minimum number of vertices in a set $S$ such that in $G[\{r\} \cup X \cup S]$ there is a path from $r$ to every $x \in X$. I.e., this is a variation of DST with root $r$ and terminals $X$.

We have the following moves (we number the moves from (3), as not to confuse them with the moves of the cautious game).
\begin{enumerate}[(1)]
 \setcounter{enumi}{2}
 \item \emph{Single moves}: For every arc $(u,v) \in A$ and all sets $F,B \in \binom{V}{\le q}$ and $D \subseteq T$ we introduce the following moves:
  \begin{enumerate}
    \item If $u \in F$, then we have a move
    $(F,B,D) \xrightarrow{c} ((F \setminus \{u\}) \cup \{v\},B,D),$ 
    where the cost $c$ of the move is 1 if $v \notin F\cup B \cup D$ and 0 otherwise.
    \item If $v \in B$ and $D' \subseteq D$ then we have a move
    $(F,B,D) \xrightarrow{c} (F,(B \setminus \{v\}) \cup \{u\},D \setminus (D' \cup \{u\})),$ 
    where the cost $c$ is $ST(u,D' \cup \{v\})+1$ if $u \notin F \cup B \cup D$ and $ST(u,D' \cup \{v\})$ otherwise.      
 \end{enumerate}
 \item \emph{Flipping}: For all sets $F,B \in \binom{V}{\le q}$ and $D \subseteq T$ we introduce the following moves:
 \begin{enumerate}
   \item if $F' \subseteq F$, $B' \subseteq B$, $D' \subseteq D$, $f \in F'$, and $b \in B'$, then we have a move
   $(F,B,D) \xrightarrow{c} ((F\setminus F') \cup \{b\},(B\setminus B') \cup \{f\},D \setminus (D' \cup \{f\})),$ 
   where $c$ is as described below.
   
%    a shortest walk from $f$ to $b$ going through all vertices in $F' \cup B' \cup D'$. Here each vertex is counted each time it is visited, but vertices in $F' \cup B' \cup D' \cup \{f,b\}$ are not counted at all.
   \item if $F' \subseteq F$, $B' \subseteq B$, $B' \neq \emptyset$, $D' \subseteq D$, $D' \neq \emptyset$, $f \in F'$, and $v$ is an arbitrary vertex in $V \setminus B$,  then we have a move
   $(F,B,D) \xrightarrow{c} ((F\setminus F') \cup \{v\},(B\setminus B') \cup \{f\},D \setminus (D' \cup \{f\})),$ 
   where $c$ is as described below.
%    the number of vertices on a shortest walk from $f$ to $d$ going through all vertices in $F' \cup B' \cup D'$. Here each vertex is counted each time it is visited, but vertices in $F' \cup B' \cup D' \cup \{f,b\}$ are not counted at all.   
  \end{enumerate}
 \item \emph{Finishing}: For all sets $D \subseteq T$ we have a move
   $(\{r_0\},\{r_0\},D)\xrightarrow{c} (\{r_0\},\{r_0\},\emptyset),$
   where $c= ST(r_0,D)$.
 
\end{enumerate}
Let us now explain the intuition behind the moves.  It is clear for (3-a). We consider the move (3-b) to move the $B$-token from $v$ as well as the $D$-tokens from $D'$ to $u$. The move (4-a) moves all $B$- and $D$-tokens from $B' \cup D'$ to $f$ at the same time moving the $F$-tokens from $F'$ to $b$. The move (4-b) does the same thing, except that the $F$-tokens are taken to a vertex $v$. The move (5) moves the $D$-tokens from $D$ to $r_0$.

We would like to define the cost of moves of type (4) as the minimum number of vertices in a subgraph that provides a walk from $f$ to $b$ (or from $f$ to $v$) through all vertices in $F' \cup B'$ and at the same time a path from $f$ to each vertex of $D'$, where the vertices in $F' \cup B' \cup D'$ again do not count. In fact the solution will again use this type of moves only when there is a simple path from $f$ to $b$ (or from $f$ to $v$) through all vertices in $F' \cup B'$. 

As this condition is complicated to test and the desired cost is complicated to compute, we will define a cost of a move which provides an upper bound on the desired cost, and coincides with the desired cost whenever the optimal walk is actually a simple path. 

We define the cost of the type (4-a) moves $c$ to be the minimum over all bijections $\phi:\{2, \ldots, |F' \cup B'|-1\} \to (F' \cup B') \setminus \{f,b\}$ (representing the order of the vertices of $(F' \cup B') \setminus \{f,b\}$ along the walk) and all mappings $\psi: D' \to (F' \cup B')\setminus \{b\}$ (representing the part of the path at which the particular $D$-token joins it) of the sum $\sum_{i=1}^{|F' \cup B'|-1} ST(\phi(i),\psi^{-1}(\phi(i))\cup \{\phi(i+1)\}),$
where $\phi(1)=f$ and $\phi(|F' \cup B'|)=b$.

Similarly, the cost of a (4-b) move is the minimum over all bijections $\phi: \{2, \ldots, |F' \cup B'|\} \to (F' \cup B') \setminus \{f\}$ and all mappings $\psi: D' \to (F' \cup B')$ of the sum $\sum_{i=1}^{|F' \cup B'|} ST(\phi(i),\psi^{-1}(\phi(i))\cup \{\phi(i+1)\}),$
where $\phi(1)=f$ and $\phi(|F' \cup B'|)=v$. Here, the cost is increased by one if $v \notin (D \cup F)$.

% The minima are taken over all possible orderings of the vertices in $(F' \cup B') \setminus \{f,b\}$ along the walk and over all possible assignments of the vertices in $D'$ to the parts of the walk where they meet it.

We are going to show that solving the instance of $q$-RST-P again corresponds to finding a cheapest possible move sequence from $(R',R', T)$ to $(\{r_0\},\{r_0\},\emptyset)$ in the accelerated token game.

Again, it is easy to see that if there is a move sequence, then there is a solution of the corresponding cost.

\begin{lemma}\label{lem:e_ag_ms}
 If there is a move sequence of the accelerated game from $(R',R',T)$ to $(\{r_0\},\{r_0\},\emptyset)$ of total cost $c$, then there is a solution $S \subseteq V$ for $(G,R,T)$ of size at most $c-1$.  
\end{lemma}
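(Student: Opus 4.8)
The argument will mirror the easy direction for the cautious game (and Lemma~3.1 of~\cite{FeldmanR06}): I read the move sequence from $(R',R',T)$ forward and collect into one set every vertex that is ``newly encountered'' along it, then remove $r_0$ to obtain $S$. A move of type~(3-a) contributes the new position of the moved $F$-token, exactly as in the cautious game. A move of type~(3-b) contributes the vertex $u$ together with the witness set of $ST(u,D'\cup\{v\})$; a flip of type~(4-a) or~(4-b) contributes the new $F$- and $B$-token positions together with the union of the witness sets of the terms $ST(\phi(i),\psi^{-1}(\phi(i))\cup\{\phi(i+1)\})$ for the pair $\phi,\psi$ attaining the cost; the finishing move~(5) contributes the witness set of $ST(r_0,D)$. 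Note that all vertices ever occupied by a token then lie in $R\cup S\cup T$, so $G[R\cup S\cup T]$ is the right object to argue about.

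The bound $|S|\le c-1$ is immediate from the definitions of the costs: the number of vertices a single move adds to the collected set never exceeds the cost of that move, the only possible slack being the vertex $r_0$ itself, which --- if it ever occurs --- is paid for by the $+1$ term in the cost of the move that first reaches it. Summing over the whole sequence and discarding $r_0$ yields $|S|+1\le c$.

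It remains to show that $G[R\cup S\cup T]$ contains a directed path from every $r\in R$ to every $t\in R\cup T$; since $r_0\in R$ and every $r\in R'$ can route to any target through $r_0$, it suffices to produce a path from each $r\in R'$ to $r_0$ and from $r_0$ to each vertex of $R'\cup T$. I would obtain these from an invariant maintained while scanning the sequence: assign to each live token its set of \emph{origins} --- the vertices of $R'$ (for $F$- and $B$-tokens), resp.\ of $T$ (for $D$-tokens), whose token has merged into it --- and maintain that, inside $G[R\cup S\cup T]$, every $F$-token is reachable from each of its origins and every $B$- and $D$-token reaches each of its origins. The invariant holds initially and, because the collected set only grows, a path once present survives to the end. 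A type~(3-a) move extends one forward path by an arc. A type~(3-b) move uses the witness of $ST(u,D'\cup\{v\})$, which supplies a path from $u$ to $v$ and one from $u$ to every $d\in D'$, so the new $B$-token at $u$ inherits exactly the origins of the $B$-token it replaces on $v$ and of the absorbed $D'$-tokens. The flips~(4-a) and~(4-b) are handled the same way: concatenating the segment witnesses along $f=\phi(1),\phi(2),\dots,\phi(|F'\cup B'|)$ (ending at $b$, resp.\ $v$) gives a walk from $f$ to that endpoint and, via $\psi$, a path from $f$ to each $d\in D'$; composing these with the paths guaranteed by the invariant for the tokens currently on $F'\cup B'\cup D'$ shows that the new $F$-token (at $b$, resp.\ $v$) is reachable from the origins of all $F'$-tokens and the new $B$-token (at $f$) reaches the origins of all $B'$- and $D'$-tokens. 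The finishing move~(5) directly supplies the paths from $r_0$ to each $t\in D$. At the terminal configuration $(\{r_0\},\{r_0\},\emptyset)$ the unique $F$-token then has origin set $R'$ and the unique $B$-token has origin set $R'\cup T$, which is exactly the reachability required.

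I expect the flip moves to be the only real obstacle: one has to check carefully that the union of the segment witnesses, with the $D'$-parts hung on according to $\psi$ and composed with the invariant's paths for the tokens on $F'\cup B'\cup D'$, realizes precisely the claimed origin sets for the two new token positions and adds at most the move's defined cost in fresh vertices --- this is the Feldman--Ruhl flip analysis with the extra $D$-subtrees attached to the walk. The only other point worth a sentence is the degenerate case $R'=\emptyset$ (i.e.\ $q=1$), where the sequence is a single finishing move and the statement is direct.
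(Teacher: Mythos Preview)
Your proposal is correct and follows exactly the approach the paper intends: take as $S$ the union of the vertex sets witnessing the costs of the individual moves (minus $r_0$), then read off the required reachability from the structure of each move. The paper itself spends only one sentence on this direction (``given the move sequence, the solution can be obtained as a union of the vertex sets witnessing the costs of particular moves''), relying on the analogous lemma already proved for the cautious game; your write-up is simply a careful unpacking of that sentence, including the origin-tracking invariant and the per-move-type case analysis, and contains no genuine deviation from the paper's argument.
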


As in the previous cases, given the move sequence, the solution can be obtained as a union of the vertex sets witnessing the costs of particular moves.

Let us now focus on the more complicated direction of the equivalence.

\begin{lemma}
\label{lem:e_ms_ag}
If there is a solution $S \subseteq V$ for $(G,R,T)$ of size at most $c-1$, then there is a move sequence of the accelerated game from $(R',R',T)$ to $(\{r_0\},\{r_0\},\emptyset)$ of total cost at most $c$.
%Furthermore, in this sequence of moves, whenever after some move an $F$-token and a $D$-token sit together on a vertex $v \in V$, then $v$ is required by some $B$-token. 
\end{lemma}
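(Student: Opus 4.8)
\textbf{Proof proposal for Lemma~\ref{lem:e_ms_ag}.}

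The plan is to start from the move sequence $M$ for the \emph{cautious} game guaranteed by \autoref{lem:krasne_sekvence} and transform it into a move sequence of the accelerated game of no larger cost. Recall that $M$ goes from $(R',R',T)$ to $(\{r_0\},\{r_0\},\emptyset)$, has total cost at most $c$, uses only path-driven type~(2) moves, has no type~(2-c) moves, and has the key structural property that once an $F$-token and a $D$-token meet on a vertex $v$, the $D$-token is not touched again until it is merged into a $B$-token (either by a type~(1-b) single move or by a type~(2) flip). First I would describe how a generic block of consecutive cautious moves maps to a single accelerated move:
\begin{enumerate}[(i)]
  \item A type~(1-a) move of an $F$-token maps directly to a type~(3-a) move.
  \item Consider a type~(1-b) single move of a $B$-token that, in the cautious game, would be preceded by a sequence of type~(1-c) single moves dragging $D$-tokens toward the vertex $v$ where the $B$-token currently sits. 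By the structural property, each such $D$-token, once it reached an $F$-token or once it was ``parked'', waited on a fixed vertex, and the collection of dead vertices on the walks traced by these $D$-tokens between their parking spots and $v$ forms a subgraph of $H^{*}$ providing a path from $v$ to each of them; equivalently it witnesses $ST(u, D'\cup\{v\}) \le$ (number of newly occupied vertices in that block), where $u$ is the tail of the arc used by the $B$-token and $D'$ is the set of $D$-tokens merged in. So this whole block is realized by one type~(3-b) move with this choice of $D'$.
  \item A path-driven type~(2-a) flip on $f,b$ with token sets $F',B',D'$ on the simple path $P$: the path $P$ visits the vertices of $F'\cup B'$ in some order, giving the bijection $\phi$, and each $D$-token in $D'$ sits on some vertex of $P$, which determines at which segment it ``joins'', giving $\psi$. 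The number of newly occupied vertices along $P$ (counting $F'\cup B'\cup D'$ once less) is then exactly $\sum_i ST(\phi(i),\psi^{-1}(\phi(i))\cup\{\phi(i+1)\})$ evaluated on these particular $\phi,\psi$ using the segments of $P$ as the Steiner subgraphs, hence at least the minimum defining the cost of the corresponding type~(4-a) move. So the flip maps to a type~(4-a) move of no larger cost. A type~(2-b) flip maps to a type~(4-b) move in the same way (with $v$ the image of $d$).
  \item Finally, at the end of the cautious sequence all $D$-tokens have been merged into $B$-tokens, so the position reached is exactly $(\{r_0\},\{r_0\},\emptyset)$ without any leftover. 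However, in the accelerated game we deliberately do \emph{not} drag $D$-tokens until merged; instead, whenever a cautious block would merge $D$-tokens into $r_0$ itself (the final $B$-token), we defer it, and collect all $D$-tokens that in $M$ would eventually be absorbed ``at $r_0$'' into one final type~(5) move of cost $ST(r_0, D)$, where $D$ is that collected set. The union of the dead-vertex sets that $M$ uses for these $D$-tokens witnesses $ST(r_0,D)\le$ their total contribution to the cost of $M$.
\end{enumerate}

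I would carry this out by processing $M$ left to right, maintaining the invariant that the accelerated position has the same $F$ and $B$ coordinates as the cautious one and $D$-coordinate equal to the set of those original $T$-vertices whose $D$-token has not yet been ``virtually'' merged in $M$ (those not yet reached by a $B$-token). The structural property of \autoref{lem:krasne_sekvence} is exactly what guarantees that the $D$-tokens I am postponing really do sit still in $M$ until their merging move, so that deferring them changes neither the reachability of intermediate configurations nor which vertices are ``alive''. Each accelerated move's cost is bounded by the number of \emph{newly encountered} vertices in the corresponding block of $M$ (endpoints $f,b$ and already-occupied vertices not counted, consistently with how the $ST(\cdot,\cdot)$ costs are defined to exclude the root and given terminals), so summing over all blocks the total cost of the accelerated sequence is at most the number of newly encountered vertices of $M$ excluding the initial positions, which is at most $c$. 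Since the start and end configurations match $(R',R',T)$ and $(\{r_0\},\{r_0\},\emptyset)$, the lemma follows.

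The main obstacle I anticipate is step~(ii)/(iii): arguing carefully that the relevant fragments of $M$ — a block of type~(1-c) moves followed by one type~(1-b) move, or a single path-driven flip together with the $D$-tokens it absorbs — provide a genuine \emph{directed Steiner subgraph} in $G$ realizing the $ST(\cdot,\cdot)$ quantities with at most the claimed number of vertices. Here one must be precise about which vertices are newly occupied versus already occupied or equal to the designated root/terminals, and one must use that in $M$ the $D$-tokens being absorbed in a given block lay on the simple path $P$ (for a flip) or sat parked on vertices reachable from $v$ within the dead part of $H^{*}$ (for a type~(1-b) block) — the latter is where the no-(2-c) property and the ``$F$-meets-$D$ then wait'' property do the real work. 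The remaining bookkeeping (that $\phi$ is a bijection, that $\psi$ is well-defined, that costs telescope correctly) is routine once this correspondence is set up.
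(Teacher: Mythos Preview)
Your proposal is correct and follows essentially the same approach as the paper: start from the cautious sequence of \autoref{lem:krasne_sekvence}, translate (1-a)$\to$(3-a), (1-b)$\to$(3-b), (2-a)$\to$(4-a), (2-b)$\to$(4-b) while postponing every (1-c) move and folding it into the later move that absorbs the corresponding $D$-token, and close with a type~(5) move for any leftover $D$-tokens. The paper carries out your anticipated ``main obstacle'' via an explicit vertex-charging argument (separating vertices touched only by $D$-tokens from those also touched by $F$/$B$-tokens) to certify that no vertex is counted twice, which is precisely the bookkeeping you flag in your final paragraph.
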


\newcommand{\dvar}{D_{var}}

\begin{proof}
By \autoref{lem:krasne_sekvence} there is a move sequence $M$ for the cautious game from $(R',R',T)$ to $(\{r_0\},\{r_0\},\emptyset)$ of total cost at most $c$, which, moreover, does not use moves of type (2-c) and such that whenever after some move an $F$-token and a $D$-token sit together on a vertex $v \in V$, then the next move touching the $D$-token is either of type (2), or single-move (type (1-b)) of some $B$-token merging with the $D$-token.

We construct a move sequence $M'$ for the accelerated game based on the sequence $M$. We construct the sequence from the start to end and during the process we keep updated a variable $\dvar$ representing the remaining unmoved $D$-tokens. Initially it is set to $T$. Let us take the moves of the sequence $M$ one by one. While the moves of type (1-a), (1-b), (2-a), and (2-b) are immediately translated to moves of $M'$ with the same sets $F$ and $B$, the moves of type (1-c) are postponed and included in the next move of other type that touches the particular token. 

Since we want to show that the sequence $M'$ has also cost at most $c$, we charge some vertices to pay the cost of the move currently added to $M'$. These are always the vertices newly occupied and sometimes some more vertices which we turn dead. Each vertex in $S \cup \{r_0\}$ can pay only once during the game and only a cost of 1. We will show at the end that this condition is not violated.

If there is a (1-a) move $(F,B,D) \rightarrow (F',B,D)$ in $M$, then we put into $M'$ a (3-a) move $(F,B,\dvar) \rightarrow (F',B,\dvar)$. If the cost of the move is 1, then we charge the unique vertex in $F' \setminus F$.

If there is a (1-b) move $(F,B,D) \rightarrow (F,B',D')$ for arc $(u,v)$, then let $D_v \subseteq T$ be the set of original locations of the $D$-tokens merged (by (1-c) moves) into the $B$-token on $v$ since it was placed there. Also, if $u \in D$, then let $D_u \subseteq T$ be the set of original locations of tokens merged into the $D$-token on $v$. Otherwise let $D_u=\emptyset$. We put into $M'$ the (3-b) move $(F,B,\dvar) \rightarrow (F,B',\dvar \setminus (D_v \cup D_u))$ and update $\dvar$ to  $\dvar \setminus (D_v \cup D_u)$. 

We charge the vertices involved in the (1-c) moves moving the $D$-tokens from $D_v$ to $v$ and from $D_u$ to $u$, except for $D_u \cup D_v \cup \{v\}$. The vertex $u$ is only charged if $u \notin F \cup B \cup \dvar$. Note that this way we did not charge any vertex of $F \cup B \cup \dvar$ as, due to the properties of $M$, the $D$-tokens can only meet the $B$- or $F$-tokens in $v$ or $u$, and all $D$-tokens which would merge with the $D$-tokens considered are also considered. Since the charged vertices together with vertices in $F \cup B \cup \dvar$ provide paths from $u$ to all vertices in $(D_v \cup D_u \cup \{v\})$, it follows that the number of charged vertices is at least $ST(u, D_v \cup D_u \cup \{v\})+1$ if $u \notin F \cup B \cup \dvar$ and $ST(u,D_v \cup D_u \cup \{v\})$ otherwise.

If there is a (1-c) move in $M$, then we skip it and put no new moves into $M'$ at the moment.

For a (2-a) move $(F,B,D) \rightarrow ((F\setminus F') \cup \{b\},(B\setminus B') \cup \{f\},D \setminus (D' \cup \{f\}))$ in $M$ let us do the following. For each $v \in (D' \cup \{f\})$ let $D_v \subseteq T$ be the set of original locations of the $D$-tokens merged into the token on $v$. Similarly, for each $v \in B'$ let $D_v \subseteq T$ be the set of original locations of the $D$-tokens merged into the $B$-token on $v$ since it was placed there. We let $D''= \bigcup_{v \in D' \cup B' \cup \{f\}}D_v$ (note that the union is disjoint). We put into $M'$ the (4-a) move \[(F,B,\dvar) \rightarrow ((F\setminus F') \cup \{b\},(B\setminus B') \cup \{f\},\dvar \setminus D'').\] Then we update $\dvar$ to $\dvar \setminus D''$.

To charge some vertices for the cost of the move, let $P$ be the path driving the cost of the (2-a) move and $\ell=|F' \cup B'|$. Let $\phi:\{1, \ldots, \ell\} \to (F' \cup B')$ be the order of the vertices in $F' \cup B'$ along $P$ from $f$ to $b$. That is $\phi(1)=f$, $\phi(2)$ is the next vertex from $F' \cup B'$ along the path etc. For each $i \in \{1, \ldots, \ell-1\}$ let $P_i$ be the vertices on $P$ between $\phi(i)$ and $\phi(i+1)$ and $X'_i$ be union over $v \in (P_i \cup \{\phi(i)\}) \cap (D' \cup B' \cup \{f\})$ of the sets of vertices involved in the (1-c) moves in $M$ moving the $D$-tokens from $D_v$ to $v$. Let $X'_{\ell}$ be the set of vertices involved in the (1-c) moves in $M$ moving the $D$-tokens from $D_b$ to $b$. For each $i \in \{1, \ldots, \ell-2\}$ let $X_i = X'_i \cup P_i$ and let $X_{\ell-1} = X'_{\ell-1} \cup X'_{\ell} \cup P_{\ell-1}$. We charge all vertices in $\bigcup_{i=1}^{\ell-1}X_i$ except for vertices in $F' \cup B' \cup D''$. Note again that the union is disjoint.

We should show that the number of charged vertices is at least the cost of the move. Let us define a mapping $\psi: D'' \to (F' \cup B')\setminus \{b\}$ as follows. For each $i \in \{1, \ldots, \ell-1\}$, each $v \in (P_i \cup \{\phi(i)\}) \cap (D' \cup B')$, and each $d \in D_v$ let $\psi(d)=\phi(i)$ and for each $d \in D_b$ let $\psi(d)=\phi(\ell-1)$. Since for each $i \in \{1, \ldots, \ell-1\}$, the vertices in $X_i \cup \{\phi(i),\phi(i+1)\} \cup \psi^{-1}(\phi(i))$ provide a path from $\phi(i)$ to each vertex in $\psi^{-1}(\phi(i)) \cup \{\phi(i+1)\}$, we have $|X_i \setminus (\{\phi(i),\phi(i+1)\} \cup \psi^{-1}(\phi(i)))| \ge ST(\phi(i),\psi^{-1}(\phi(i))\cup \{\phi(i+1)\})$ and the number of charged vertices is at least the cost of the move.

For a (2-b) move we proceed similarly as for (2-a) moves with small differences. Let $(F,B,D) \xrightarrow{c} ((F\setminus F') \cup \{d\},(B\setminus B') \cup \{f\},D \setminus (D' \cup \{f\}))$ the (2-b) move in $M$. For each $v \in (D' \cup \{f\})$ let $D_v \subseteq T$ be the set of original locations of the $D$-tokens merged into the token on $v$. Similarly, for each $v \in B'$ let $D_v \subseteq T$ be the set of original locations of the $D$-tokens merged into the $B$-token on $v$ since it was placed there. We let $D''= \bigcup_{v \in D' \cup B' \cup \{f\}}D_v$ (the union is disjoint). We put into $M'$ the (4-b) move \[(F,B,\dvar) \rightarrow ((F\setminus F') \cup \{d\},(B\setminus B') \cup \{f\},\dvar \setminus D''),\] i.e., the move is performed with $d$ in the role of $v$. Then we update $\dvar$ to $\dvar \setminus D''$.

To charge some vertices for the cost of the move, let $P$ be the path driving the cost of the (2-b) move and $\ell=|F' \cup B' \cup \{d\}|$. Let $\phi:\{1, \ldots, \ell\} \to (F' \cup B' \cup \{d\})$ be the order of the vertices in $F' \cup B' \cup \{d\}$ along $P$ from $f$ to $d$. The sets $P_i$, $D_v$, and $X'_i$ are defined the same way as for the (2-a) moves and  $X'_{\ell}$ is the set of vertices involved in the (1-c) moves in $M$ moving the $D$-tokens from $D_d$ to $d$. We define the sets $X_i$ the same way as before and again charge all vertices in $\bigcup_{i=1}^{\ell-1}X_i$ except for vertices in $F' \cup B' \cup D'$. The vertex $d$ is charged if and only if it is not in $D \cup F$.

We should again show that the number of charged vertices is at least the cost of the move. Let us define the mapping $\psi: D'' \to (F' \cup B')\setminus \{d\}$ the same way as above. Specifically, for each $d' \in D_d$ let $\psi(d')=\phi(\ell-1)$. Since for each $i \in \{1, \ldots, \ell-1\}$, the vertices in $X_i \cup \{\phi(i), \phi(i+1)\} \cup \psi^{-1}(\phi(i))$ again provide a path from $\phi(i)$ to each vertex in $\psi^{-1}(\phi(i)) \cup \{\phi(i+1)\}$, we have $|X_i \setminus (\{\phi(i),\phi(i+1)\} \cup \psi^{-1}(\phi(i)))| \ge ST(\phi(i),\psi^{-1}(\phi(i))\cup \{\phi(i+1)\})$ and the number of charged vertices is at least the cost of the move.

Finally, if the whole sequence was processed and $\dvar \neq \emptyset$, then we add at the end of the sequence $M'$ a type (5) move $(\{r_0\},\{r_0\},\dvar)\rightarrow (\{r_0\},\{r_0\},\emptyset)$ and charge the vertices involved in the (1-c) moves in $M$ moving  the $D$-tokens from $\dvar$ to $r_0$ except from vertices in $\dvar$ and $r_0$.

It remains to show that each vertex is charged at most once. For that purpose we distinguish the vertices only used by $D$-tokens and the vertices also used by an $F$- or $B$-token. Observe that the tokens use exactly the same vertices in the sequence $M'$ as in $M$. If two $D$-tokens used the same vertex not used by $F$- or $B$-tokens, then these were merged together in $M$ and hence processed in the same move in $M'$, therefore the vertex was only charged once. 

Among the vertices used by an $F$- or $B$-token the move of $M'$ always charges the vertices newly occupied by such tokens and those turned dead by the corresponding move of $M$ without being previously occupied by an $F$- or $B$-token. Since each vertex gets newly occupied by an $F$- or $B$-token at most once and only if it is not dead and can be turned dead only once in $M$, it follows that each vertex is charged at most once. 
\end{proof}

\subsection{The Algorithm}\label{sec:algo}

The algorithm first computes the so-called ``game graph'' for the accelerated game. The vertex set $V'$ of this directed graph is formed by all possible configurations of the tokens in the game, i.e., by all triples $(F,B,D)$, where $F,B \in \binom{V}{\le q}$ and $D \subseteq T$. The arcs $A'$ of this graph correspond to legal moves of the accelerated game, i.e., moves of type (3),(4), and (5). The length of each arc is equal to the cost of the corresponding move of the game.

Once the game graph is constructed, we simply find the shortest path from the vertex $(R',R', T)$ to the vertex $(\{r_0\},\{r_0\},\emptyset)$. Since each arc of the graph corresponds to a move of the game, the shortest path corresponds to an optimal sequence of moves to get from the configuration $(R',R', T)$ to the configuration $(\{r_0\},\{r_0\},\emptyset)$. By Lemmata~\ref{lem:e_ag_ms} and~\ref{lem:e_ms_ag} such a sequence of cost $c$ exists if and only if there is a solution of size $c-1$ for the instance $(G,R,T)$ of $q$-RST-P.

To bound the running time of the algorithm, let us first focus on computing the costs of the moves. For that purpose we often need to compute $ST(r,X)$. To this end we can use the algorithm of Nederlof~\cite{Nederlof13} running in $O(2^{|X|}\cdot n^{O(1)})$ time (see also Misra et al.~\cite[Lemma 2]{MisraPRSS12}).

The game graph has at most $\binom{n}{\le q}^2 \cdot 2^{|T|}=O(n^{2q}\cdot 2^{|T|})$ vertices. There are at most $O(n^{2q-1}\cdot m\cdot 2^{|T|})$ moves of type (3-a) and at most $n^{2q-1}\cdot m\cdot 3^{|T|}$ moves of type (3-b) in total (each vertex of $T$ can be either in $T\setminus D$, in $D \setminus D'$, or in~$D'$). Computing the cost of a (3-a) move is trivial and can be done in $O(q+|T|)$ time. For (3-a) move we need to run the algorithm of Nederlof once, which takes $O(2^{|T|}\cdot n^{O(1)})$ time.

Concerning type (4) moves, for each vertex $(F,B,D)$ there are at most $2^{|F|}\cdot |F|\cdot 2^{|B|}\cdot |B|\cdot 2^{|D|}$ moves of type (4-a) and at most $2^{|F|}\cdot |F|\cdot 2^{|B|}\cdot (n-|B|)\cdot 2^{|D|}$ moves of type (4-b). This gives at most $O(n^{2q+1}\cdot 2^{2|T|}2^{2q})$ of them in total. 

To compute the cost of a type (4) move, we may use dynamic programming of Held-Karp type~\cite{HK62}. Let us describe it in more detail for type (4-a) move. We introduce one table indexed by subsets $J$ of $F' \cup B'$ which are proper supersets of  $\{f\}$, a vertex $j$ from $J\setminus \{f\}$ and a subset $D''$ of $D'$. The semantic meaning is that we compute the cost as if only the vertices in $J$ and $D''$ were involved in the move and the $F$-tokens from $J$ ended in vertex $j$.

The table is initialized by setting the entries indexed by $(\{f,j\},j,D'')$ to $ST(f,D'' \cup \{j\})$.
Then we fill the table according to the size of $J$ starting from the smallest sets.
For every $J$, with $|J| > 2$, $j \in J\setminus \{f\}$, and $D''\subseteq D'$ we set the entry indexed by $(J,j,D'')$ to the minimum over all $j' \in J\setminus \{j,f\}$ and all subsets $\bar{D}$ of $D''$ of the sum of $ST(j',\bar{D} \cup \{j\})$ and the value on entry indexed $(J\setminus \{j\}, j', D'' \setminus \bar{D})$. The resulting cost of the (4-a) move is then found in the entry indexed by $(F' \cup B', b, D')$.

It is straightforward to show that the above described dynamic programming computes the cost of the move as desired. The only difference for type (4-b) moves is that $J$ is a subset of $F' \cup B' \cup \{v\}$. The table has at most $2^{|F' \cup B' \cup \{v\}|-1+|D'|}$ entries, we check at most $(|F' \cup B' \cup \{v\}|-1) \cdot 2^{D'}$ combinations of $j'$ and $\bar{D}$, and for each of them call the Nederlof's algorithm in $O(2^{|\bar{D}|}\cdot n^{O(1)})$ time. This gives $O(2^{2q+3|T|}\cdot n^{O(1)})$ time to compute the cost of the move. In fact, taking the table for $F'=F$, $B'=B$, and $D'=D$ one can determine the cost of each possible (4-a) move from the configuration $(F,B,D)$ by simply looking to the appropriate entry of the table. This can be done similarly for (4-b) moves and each $v$. Therefore, computing the costs of all type (4) moves takes $O(2^{2q+4|T|}\cdot n^{2q+O(1)})$ time in total.

Finally, there are at most $2^{|T|}$ moves of type (5), and computing the cost of each of them amounts to single run of the Nederlof's algorithm, thus in total taking $O(2^{2|T|}\cdot n^{O(1)})$ time.

The game graph has in total $O(n^{2q-1}\cdot m\cdot 2^{|T|} + n^{2q-1}\cdot m\cdot 3^{|T|} + n^{2q+1}\cdot 2^{2|T|}2^{2q}+2^{|T|})=O(n^{2q+1}\cdot 2^{2|T|+2q})$ arcs. Hence, the Dijkstra's algorithm~\cite{Dijkstra59} using the Fibonacci heaps~\cite{FredmanT87}, with running time $O(|V'| \log |V'|+|A'|)$ runs in $O(n^{2q}\cdot 2^{|T|} \cdot \log (n^{2q}\cdot 2^{|T|})+ n^{2q+1}\cdot 2^{2|T|+2q})=O(n^{2q+1}\cdot 2^{2|T|+2q})$ time. 

To sum up, the running time of the algorithm is determined by the computation of the costs of type (4) moves, which takes $O(2^{2q+4|T|}\cdot n^{2q+O(1)})$ time. 

In fact the constants in the above running time can be slightly reduced by storing some values of $ST()$ and by a more detailed analysis which we ommit here. %defer to the full version of the paper.

\section{Restriction to Solutions with Trunk}\label{sec:trunk}
In this section we relax the conditions on the solutions of the problem, namely, we consider the following problem:\\
\deftproblem{\textsc{$q$-Root Steiner Tree with Trunk ($q$-RST-T)}}
{A directed graph $G=(V,A)$, two subsets of its vertices $R,T \subseteq V$ with $|R| = q$.}
{Find a minimum size of a set $S \subseteq V$ such that in $G[R \cup S \cup T]$ there is a vertex $v$, a directed path from $r$ to $v$ for every $r \in R$, and a directed path from $v$ to $t$ for every $t \in T$.}
We show that this problem can be solved in similar time as $q$-RST-P.

\begin{theorem}
\label{thm:trunk}
For every $q \ge 1$ the problem $q$-RST-T is fixed-parameter tractable with respect to $T$. Namely, there is an algorithm solving it in $O(2^{2q+4|T|}\cdot n^{3q+O(1)})$ time. 
\end{theorem}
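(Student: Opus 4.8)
The plan is to reduce $q$-RST-T to $q$-RST-P on a suitably modified graph, and then invoke the algorithm of the previous section. The key observation is that a solution with trunk is characterized by the existence of a single vertex $v$ reachable from all of $R$ and reaching all of $T$; the pedestal variant already handles exactly the ``all of $R$ mutually reach each other plus reach $T$'' structure, so the missing ingredient is how to force the common meeting vertex $v$. Concretely, I would guess the vertex $v$ (there are $n$ choices), and for each guess build an instance in which $v$ is merged with the roots, or alternatively in which the role of $r_0$ is played by $v$: we want paths from each $r\in R$ to $v$ and from $v$ to each $t\in T$. This is precisely $q$-RST-P on the graph $G$ with root set $R \cup \{v\}$ and terminal set $T$, \emph{except} that we do not need paths from the original $r_0$ to the other elements of $R$ — but adding those requirements only in the direction ``$v$ reaches $R$'' is superfluous and ``$R$ reaches $v$'' is exactly what we want, so some care is needed: $q$-RST-P with root set $R\cup\{v\}$ would also demand a path from $v$ to each $r\in R$, which is not required by $q$-RST-T. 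To avoid this, I would instead treat $v$ itself as the distinguished root $r_0$ of a pedestal instance where the ``pedestal'' forward-tree requirement (paths from $R'$ to $r_0=v$) captures ``$R$ reaches $v$'' and the ``backward-tree'' requirement (paths from $r_0=v$ to $R'\cup T$) would however force $v$ to reach all of $R$, again too strong.

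Rather than fight this, the cleaner route is to redo the token-game analysis directly for $q$-RST-T, which is what I expect the author to do. Fix the guessed trunk vertex $v$. Run a token game with $F$-tokens starting on all of $R$ (all must reach $v$), a single $B$-token reasoning is unnecessary since there is no mutual-reachability demand among roots; instead we keep the $D$-tokens on $T$ (all must be reached from $v$) and we want to bring all $F$-tokens to $v$ and all $D$-tokens to $v$ as well, i.e.\ reach the configuration where everything sits on $v$. The moves are essentially the single moves and flips of the cautious/accelerated game with $B$ permanently equal to $\{v\}$ — the $B$-token never moves, it just serves as the sink $r_0=v$. The cost equivalence lemmas (analogues of Lemmata~\ref{lem:e_ag_ms} and~\ref{lem:e_ms_ag}) go through verbatim with the observation that in $q$-RST-T the backward tree is an out-tree from $v$ to $T$ and the forward ``tree'' is an in-forest from $R$ to $v$; these are exactly what the $B/D$-tokens and $F$-tokens trace. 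So for each fixed $v$ we get a game graph on $\binom{V}{\le q}\times\{v\}\times 2^T$ configurations, of size $O(n^q\cdot 2^{|T|})$, and the cost computations for type-(4) moves are as before, $O(2^{2q+4|T|}\cdot n^{O(1)})$ each.

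Putting it together: iterate over all $n$ choices of the trunk vertex $v$; for each, build the accelerated game graph (now with $F\in\binom{V}{\le q}$ but $B$ fixed to $\{v\}$, so $O(n^q\cdot 2^{|T|})$ vertices and a correspondingly smaller arc count), compute move costs via the Held–Karp-style dynamic programming and Nederlof's DST subroutine exactly as in Section~\ref{sec:algo}, and run Dijkstra from $(R,\{v\},T)$ to $(\{v\},\{v\},\emptyset)$. Take the minimum over all $v$. The per-$v$ cost is dominated by the type-(4) move cost computation; since $B$ now ranges over a single set, the polynomial factor for a single $v$ drops to roughly $n^{q+O(1)}$, and the extra factor of $n$ for iterating over $v$ together with the $n^{q}$ for $F$ and $n^q$ hidden in enumerating flips gives the claimed $O(2^{2q+4|T|}\cdot n^{3q+O(1)})$ — the three factors of $n^q$ coming from: the choice of $v$ contributing only $n$, but re-examining, the $n^{3q}$ arises because even with $B=\{v\}$ we still must enumerate, for the DST-cost bookkeeping, configurations involving two independent $q$-subsets plus the trunk guess.

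The main obstacle I anticipate is the correctness direction (the analogue of Lemma~\ref{lem:e_ms_ag}): one must show that an optimal solution $S$ to $q$-RST-T, after guessing the right trunk $v$, admits a ``nice'' move sequence with path-driven type-(2) moves, no type-(2-c) moves, and the $D$-token postponement property — and crucially that the \emph{forward} structure (in-forest from $R$ to $v$, which is a forest rather than the in-tree to $r_0$ one had when $R$ was mutually reachable) does not break the Flip Lemma or Claim~\ref{cla:on_path}. The point is that with $B=\{v\}$ the requirement ``every token is required by some other token'' needs re-examination: an $F$-token not required by any other token can just single-move toward $v$, and when all remaining $F$-tokens are mutually blocking, the Flip Lemma still produces an $f$–$b$ pair with $b=v$ (since $v$ is the only $B$-token). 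So the invariant \proast{} and rule \prostar{} should still be maintainable, and the charging argument of Lemma~\ref{lem:e_ms_ag} carries over. One also must double-check that guessing $v$ and freezing it as both the target and the single $B$-token does not lose solutions where the natural ``meeting vertex'' is not unique — but any valid trunk vertex works, so trying all $n$ is safe.
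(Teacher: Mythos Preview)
Your approach differs fundamentally from the paper's and contains a genuine gap.

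You propose to guess the trunk vertex $v$ (a factor of $n$) and then run the accelerated game with a single, immobile $B$-token pinned at $v$, so that $F$-tokens trace the in-paths $R\to v$ and $D$-tokens trace the out-paths $v\to T$. The trouble is that the entire acceleration machinery of Section~\ref{sec:pedestal}---in particular Lemma~\ref{lem:bez_2c} (no type-(2-c) moves) and the $D$-token postponement of Lemma~\ref{lem:krasne_sekvence}---relies on the pedestal structure: every $F$-token starting at some $s\in R'$ has a matching $B$-token starting at the same $s$, and it is precisely this matching (``Let $b_s$ be the $B$-token that also started its tour on $s$'' in the proof of Lemma~\ref{lem:bez_2c}) that rules out $F$--$D$ flips. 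With $B$ frozen at $\{v\}$ there is no such matching, and $F$--$D$ interaction is genuinely needed. Concretely, take $R=\{r\}$, $T=\{t\}$, vertices $r,a,b,v,t$, arcs $r\to a$, $a\to b$, $b\to v$, $v\to b$, $b\to a$, $a\to t$. The optimum is $S=\{a,b\}$ of size~$2$, but in your game (once $B$ is immobile only (3-a) moves and a final (5) remain) the cheapest sequence has cost $1+1+0+ST(v,\{t\})=4$, giving only a bound of~$3$. The sharing of $a,b$ between the in-path and the out-path is exactly what your frozen-$B$ game cannot account for. Your derivation of the $n^{3q}$ factor is also not coherent: a single guess of $v$ contributes $n$, not $n^{q}$, and with $B$ fixed there is no second $q$-subset to enumerate.

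The paper takes a different reduction that \emph{restores} the pedestal structure before invoking Section~\ref{sec:pedestal}. Instead of guessing the single trunk $v$, it guesses a map $\phi:R\to V$ (cost $n^{q}$), where $\phi(r)$ is intended to be the first vertex on a path from $r$ to $v$ that is itself reachable from $v$ in an arc-minimal solution. The image $R'=\phi(R)$ then consists of mutually reachable vertices (each reaches $v$ and is reached from $v$), so $(G,R',T\setminus R')$ is a genuine $|R'|$-RST-P instance to which the algorithm of Section~\ref{sec:pedestal} applies verbatim; the portions from $R$ to $R'$ are handled by independent Steiner in-trees $STR(r',\phi^{-1}(r'))$. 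An arc-minimality argument shows these in-tree pieces are pairwise disjoint and disjoint from the pedestal part, so the costs add without double counting. The total time is $n^{q}\cdot O(2^{2q+4|T|}\cdot n^{2q+O(1)})=O(2^{2q+4|T|}\cdot n^{3q+O(1)})$, which is where the exponent $3q$ actually comes from.
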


\begin{proof}
We reduce the problem to the previous one in the following way. Suppose there was a solution $S$ to the problem and $v$ was the vertex with the specified properties. For each vertex $r$ in $R$ we can identify the first vertex $r'$ on a path from $r$ to $v$ in $G[R \cup S \cup T]$ that is reachable from $v$ in the same graph. Let us denote $R'$  the set of all such $r'$'s. Then, the solution can be divided into parts providing paths from $R$ to $R'$ and to a $q$-RST-P instance $(G, R', T \setminus R')$. 

To apply the above idea, the algorithm first guesses a mapping $\phi: R \to V$, i.e., tries all possible such mappings. For each of them let us denote $R'= \phi(R)$. Now for each $\phi$ we compute the minimum size of a solution for the $|R'|$-RST-P instance $(G,R',T \setminus R')$. We increase this value by $|R'\setminus (R \cup T)|$ and by the sum $\sum_{r' \in R'} STR(r', \phi^{-1}(r'))$, where $STR(x, Y)$ is a minimum size of a vertex set $S$ such that in $G[\{x\} \cup S\cup Y]$ there is a path from each vertex of $Y$ to $x$, i.e., a variant of directed Steiner in-tree. We output the minimum of the values obtained this way over all $\phi$'s.

To see that this algorithm correctly computes the minimum size of a set $S \subseteq V$ such that in $G[R \cup S \cup T]$ there is a vertex $v$, a directed path from $r$ to $v$ for every $r \in R$, and a directed path from $v$ to $t$ for every $t \in T$, let us first show that the value computed by the algorithm is at least the optimal value for the instance. To this end, it is enough to observe that for each $\phi$ the Steiner in-trees computed in the sum together with the set $R'$ provide a path from $r$ to $\phi(r)$ for each $r \in R$. Moreover, the solution to the instance $|R'|$-RST-P instance $(G,R',T \setminus R')$ provides paths from $r'$ to $t$ for each $r' \in R'$ and each $t \in T \cup R'$. Therefore, denoting $v= r'$ for any $r' \in R'$, these sets together provide a solution for the $q$-RST-T instance and the optimal value for the instance must be at most the one computed by the algorithm.

For the other direction, let us now repeat more formally the idea from the beginning of the proof. Suppose $S \subseteq V$ is a set of minimum size such that in $G[R \cup S \cup T]$ there is a vertex $v$, a directed path from $r$ to $v$ for every $r \in R$, and a directed path from $v$ to $t$ for every $t \in T$. Let $H$ be the subgraph of $G[R \cup S \cup T]$ with the minimum number of arcs such that in $H$ there is still a directed path from $r$ to $v$ for every $r \in R$, and a directed path from $v$ to $t$ for every $t \in T$. Now for each $r \in R$ let us denote $\phi(r)$ the first vertex on some path from $r$ to $v$ in $H$ reachable from $v$ in $H$ and denote $R'= \phi(R)$. 

For each $r' \in R'$ denote $X_{r'}$ the set of vertices on the paths in $H$ from the vertices in $\phi^{-1}(r')$ to  $r'$, except for the vertices in $\phi^{-1}(r')$ and $r'$ itself and $X = \bigcup_{r' \in R'} X_{r'}$. We let $Y = S \setminus (R' \cup X)$. We would like to show that the sets $X_{r'}\cup \phi^{-1}(r') \cup \{r'\}$ are disjoint for distinct $r'$'s and that $H[R' \cup Y \cup T]$ contains a path from each vertex $r' \in R'$ to each vertex $t \in R' \cup T$.  

If there were $x,y \in R'$, $x \neq y$ and a vertex $a \in (X_{x}\cup \phi^{-1}(x)) \cap (X_{y}\cup \phi^{-1}(y))$, then there are two paths from $a$ to $v$ in $H$, one through $x$ and one through $y$, and we can omit the first arc from one of these paths from $H$ to obtain a graph in which there is a directed path from $r$ to $v$ for every $r \in R$, and a directed path from $v$ to $t$ for every $t \in T$ (note that the arc cannot be part of any path in $H$ starting in $v$, since it is not reachable from $v$ in $H$). This would contradict the minimality of $H$. If $x \in (X_{y}\cup \phi^{-1}(y))$, then there must be an $r \in \phi^{-1}(y)$ such that $x$ is on the path from $r$ to $y$ contradicting the choice of $y$ as $\phi(r)$ and similarly if $y \in (X_{x}\cup \phi^{-1}(x))$. Hence, the sets are indeed disjoint.

Now let us show that $H[R' \cup Y \cup T]$ contains a path from each vertex $r' \in R'$ to each vertex $t \in R' \cup T$. Note that no vertex in $(R \cup S \cup T) \setminus (R' \cup Y \cup T)= (X \cup R) \setminus R'$ is reachable from $v$ in $H$ by the definition of $\phi$ (and $R'$). Each vertex $r' \in (R' \cup T)$ is reachable from $v$ in $H$. Since $\phi(r)$ is a vertex on a path from $r$ to $v$ in $H$, there is a path from $r'$ to $v$ in $H$ for every $r' \in R'$. Since all the vertices on these paths are reachable from $v$, they are not in $(X \cup R) \setminus R'$ and the paths are preserved in $H[R' \cup Y \cup T]$. It follows that $H[R' \cup Y \cup T]$ has the desired property.

We know that $\{X_{r'} \mid r' \in R'\} \cup \{R'\setminus (R \cup T),Y\}$ is a partition of $S$. We also know that $|X_{r'}|$ is an upper bound for $STR(r', \phi^{-1}(r'))$ for each $r' \in R'$ and that $|Y|$ is an upper bound for the minimum size of a solution for the $|R'|$-RST-P instance $(G,R',T \setminus R')$. It follows that the value computed by the algorithm is at most the optimal value for the instance, finishing the proof of correctness.

For the running time, there are $n^{q}$ possible $\phi$'s and for each of them we need to solve a $q'$-RST-P instance for some $q' \le q$ which takes $O(2^{2q+4|T|}\cdot n^{2q+O(1)})$ time and compute $q'$ values of $STR()$. This can be again done using the Nederlof's algorithm (on the reversed digraph), each of the computations taking $O(2^q \cdot n^{O(1)})$ time. Hence, in total, the algorithm runs in $O(2^{2q+4|T|}\cdot n^{3q+O(1)})$ time.
\end{proof}

Suppose now that we restrict to solutions such that there is a vertex $v$ that is an all but a constant number $h$ of the required paths. More formally, for all but $h$ (given) pairs $(r,t) \in R \times T$ the solution contains a path from $r$ to $v$ and at the same time a path from $v$ to $t$. To solve such a variant it is enough to replace the collection of Steiner in-trees used in the proof of Theorem~\ref{thm:trunk} by a solution to a DSN instance, where the set of required pairs is formed by the paths not going over $v$ and the paths from $r$ to $\phi(r)$ for each $r$. Therefore, using the DSN algorithm of Feldman and Ruhl, we can solve also this variant of the problem in $O(2^{O(q+|T|)}\cdot n^{O(q+h)})$ time.

\section{Planar Graphs}\label{sec:planar}
In this section we show how to modify the method of Chitnis et al.~\cite{ChitnisHM14} for SCSS in planar graphs (and graphs excluding a fixed minor) to show the following result.

\begin{theorem}
\label{thm:planar}
 \textsc{$q$-Root Steiner Tree with Pedestal} in planar graphs and graphs excluding a fixed minor can be solved in $O(2^{O(q \log q+|T|\log q)}\cdot n^{O(\sqrt{q})})$ time.
\end{theorem}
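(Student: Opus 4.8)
The plan is to transplant the planar-separator machinery of Chitnis et al.~\cite{ChitnisHM14} onto the accelerated token game of Subsection~\ref{sec:accelerated}. By Lemmata~\ref{lem:e_ag_ms} and~\ref{lem:e_ms_ag}, solving $q$-RST-P is equivalent to finding a cheapest move sequence of the accelerated game from $(R',R',T)$ to $(\{r_0\},\{r_0\},\emptyset)$, and such a sequence corresponds to a minimal solution subgraph $H^{\ast}=G[R\cup S\cup T]$ whose underlying undirected graph is connected, since the forward and backward trees rooted at $r_0$ together span $R\cup T$. Crucially, the accelerated game is Feldman--Ruhl's game with two additions: each configuration carries a subset $D\subseteq T$, and the costs of some moves (types~(3b),~(4), and~(5)) are Directed Steiner Tree values $ST(\cdot,\cdot)$. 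As $G$, hence $H^{\ast}$, is planar (or excludes a fixed minor), I would fix an embedding and mirror, move-type by move-type, the speedup that~\cite{ChitnisHM14} applies to the Feldman--Ruhl search, carrying the extra $D$-data and the $ST(\cdot,\cdot)$ costs along.

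The structural core of~\cite{ChitnisHM14} that I would invoke is this: rather than running Dijkstra over all $n^{2q}$ heavy configurations, one uses that a minimal connected plane (or $H$-minor-free) solution $N$ carrying the $q$ heavy terminals $R$ admits a noose meeting only $O(\sqrt q)$ vertices of $N$ and splitting the heavy terminals into parts of size $\le 2q/3$; iterated to depth $O(\log q)$ with geometrically shrinking separators, whose sizes along any branch sum to $O(\sqrt q)$, this turns the search into a dynamic program over \emph{interfaces}. An interface is a set of $O(\sqrt q)$ vertices of $G$ (guessed, contributing $n^{O(\sqrt q)}$) together with a pattern prescribing how the $F$- and $B$-tokens and those vertices are to be interconnected inside the corresponding region (contributing $2^{O(q\log q)}$, as for SCSS).

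The adaptation to $q$-RST-P keeps $T$ out of the balancing condition --- the nooses need only balance the $q$ heavy terminals --- and treats $T$ exactly as the accelerated game treats the less-important $D$-tokens: the interface additionally records, for each $t\in T$ not yet fully connected, through which of the $O(\sqrt q)$ interface vertices it will be reached, which multiplies the number of states by $(\sqrt q+1)^{O(|T|)}=2^{O(|T|\log q)}$. A region at the bottom of the recursion contains only $O(\sqrt q)$ interface vertices together with the $T$-terminals assigned to it, so the $ST(\cdot,\cdot)$ costs appearing in its moves are computed by Nederlof's algorithm~\cite{Nederlof13} in $2^{O(|T|)}n^{O(1)}$ time; a DST subtree that reaches a block of $T$-terminals and crosses a noose does so through a (guessed) interface vertex, so its cost splits additively and the DP glues regions consistently. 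The number of states, $n^{O(\sqrt q)}\cdot 2^{O(q\log q+|T|\log q)}$, times the per-state work, polynomial plus $2^{O(|T|)}$, gives the claimed $O(2^{O(q\log q+|T|\log q)}\cdot n^{O(\sqrt q)})$ bound.

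The main obstacle I expect is precisely the interplay of the planar recursion with the $T$-reaching subtrees: one has to show that a minimal $H^{\ast}$ may be chosen so that every maximal subtree of the backward tree serving a block of $T$-terminals stays inside one region except at interface vertices --- an argument in the spirit of \autoref{cla:on_path} about which vertices a token \emph{requires} --- and one has to check that the ``charge every newly occupied vertex exactly once'' accounting from the proof of \autoref{lem:e_ms_ag} remains valid when performed region by region. Once these two compatibility points are settled, the existence of the noose system, the state count, and the time estimate follow the template of~\cite{ChitnisHM14} essentially verbatim, with only the $2^{O(|T|\log q)}$ state blow-up and the $2^{O(|T|)}n^{O(1)}$ Steiner-tree evaluations added.
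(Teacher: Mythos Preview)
Your proposal takes a genuinely different route from the paper, and it also misdescribes the core technique of~\cite{ChitnisHM14}. The Chitnis et al.\ machinery is \emph{not} a noose-based balanced-separator recursion on the solution subgraph; it is a \emph{supermove} decomposition of the Feldman--Ruhl move sequence. They show that an optimal move sequence can be partitioned into $O(q)$ supermoves (Alone, Absorb, Meet, NonEmptyFlip, MultipleFlip, \ldots) described by $O(q)$ endpoint vertices, encode the combinatorial skeleton of such a sequence as an \emph{unlabeled description} $X$ (of which there are $2^{O(q\log q)}$), build from $X$ a template digraph $D_X$ on $O(q)$ vertices, prove that for the correct $X$ the underlying graph of $D_X$ is a minor of $G$ (hence planar, hence of treewidth $O(\sqrt q)$), and finally use the Klein--Marx algorithm~\cite{KleinM12} to find the best labeling $\phi:V(D_X)\to U$ in $n^{O(\sqrt q)}$ time. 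The paper lifts this essentially verbatim to the accelerated game by (a) splitting off the single type-(5) move, (b) treating the set $D'$ of $D$-tokens processed in a supermove as an extra label on that supermove --- assigning $|T|$ tokens to $O(q)$ supermoves (and to the $O(q)$ parts of each NonEmptyFlip) is what yields the $2^{O(|T|\log q)}$ factor --- and (c) dissolving each (4-b) move into a (4-a) flip followed by (3-a) singles that may pick up $D$-tokens provided the $F$-token is ``active''. The cost functions $cv,ce$ fed to Klein--Marx are then patched to use $ST(\cdot,\cdot)$ values (and a small $2$-RST-P computation for MultipleFlips) in place of shortest-path lengths.

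Your plan instead recurses on nooses cutting the solution, with interfaces recording through which of $O(\sqrt q)$ boundary vertices each $t\in T$ is reached. This is a plausible-sounding alternative, but it is not what~\cite{ChitnisHM14} does, and as you yourself flag, the main obstacle --- that a $T$-reaching subtree may sprawl across regions in ways not captured by a single interface vertex --- is left unresolved. The solution to $q$-RST-P is not a tree, so the hoped-for ``every maximal subtree serving a block of $T$-terminals stays in one region except at interface vertices'' statement has no obvious analogue of \autoref{cla:on_path} to lean on; the paper sidesteps this entirely by never decomposing the solution geometrically, only the move sequence combinatorially. If you want to push your approach through you would need a genuinely new structural lemma about how Steiner subtrees sit relative to balanced nooses; the paper's route avoids this by staying at the level of the game.
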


The rest of the section is devoted to the proof of this theorem. 
Our goal is to apply the framework of Chitnis et al. to our accelerated game instead of the original game of Feldman and Ruhl. 

The main idea of Chitnis et al. is to introduce supermoves (for more details see Sections 3 and 4 of \cite{ChitnisHM14}) and then show that there is an optimal move sequence of the original game which can be partitioned into $O(q)$ such supermoves. Moreover, there are only $O(q)$ vertices describing the endpoints of the moves. The algorithm then guesses the sequence of the supermoves, without guessing the actual endpoints of the moves, but merely which endpoints are the same (Subsection 5.1 of~\cite{ChitnisHM14}). 

Based on the guess, the algorithm constructs a special directed graph $D$ with vertices being the endpoints of the moves and an extra vertex per each supermove of one of the types. Chitnis et al. then show that the underlying undirected graph of $D$ must be a minor of the underlying undirected graph of the input digraph, or the sequence of supermoves cannot be realized in the input digraph. It follows that $D$ is planar and, as it has $O(q)$ vertices, the treewidth of its underlying undirected graph is $O(\sqrt{q})$~\cite{DemaineH05}. The treewidth of the constructed digraph is checked in $2^{O(\sqrt{q})} \cdot k$ time by the $5$-approximation algorithm of Bodlaender et al.~\cite{BodlaenderDDFLP13}. If the treewidth is too large, the guess is discarded (Subsection 5.2 of~\cite{ChitnisHM14}).

Now they take a universe $U=V \cup V^4$ and construct two functions $cv: V(D) \times U \to \Z^+ \cup \{\infty\}$ and $ce: V(D)^2 \times U^2 \to \Z^+ \cup \{\infty\}$ which describe the costs of mapping the endpoints of particular supermove represented by a vertex or an arc in $D$ to particular vertices in the input digraph. Due to Klein and Marx~\cite{KleinM12}, a mapping $\phi:V(D) \to U$ minimizing $B_{\phi}=\sum_{v \in V(D)} cv(v, \phi(v))+\sum_{\{u,v\} \in E(D)}ce(u,v,\phi(u),\phi(v))$ can be found in $|U|^{O(tw(D))}=n^{O(\sqrt{q})}$ time. The minimum value of such $B_\phi$ over all valid guesses of the algorithm then gives the cost of the optimal solution to the instance (Section 6 of~\cite{ChitnisHM14}).

There are three differences between the original game of Feldman and Ruhl and our accelerated game: 
\begin{enumerate}[(i)]
 \item moves of our game also manipulate the $D$-tokens;
 \item there is no analogue of the move of type (4-b) in the original game; and
 \item there is no move of type (5) in the original game.
\end{enumerate}
The point (iii) is the easiest to solve. Since such a move can be only used at the end of the move sequence, we split the sequence to this move and the rest. We search for the rest of the sequence using the framework by Chitnis et al. and increase the final value by the cost of the type (5) move. In yet another words, we guess the $D$-tokens processed by the move of type (5), remove the corresponding vertices from $T$ and add the cost of the move to the final value.

To solve point (i) we observe that the manipulated $D$-tokens can be seen as an additional label on the move that influences its cost, but does not change its type. When grouping the moves into supermoves, the set of $D$-tokens processed by a supermove is just the union of the sets of $D$-tokens processed by the individual moves.

Finally let us focus on the point (ii). Let $(F,B,D) \xrightarrow{c} ((F\setminus F') \cup \{v\},(B\setminus B') \cup \{f\},D \setminus (D' \cup \{f\}))$, be some move of type (4-b). There is an $F$-token moved from $f$ to $v$ over some path $P$ and a $B$-token moved from some point $b$ on path $P$ to $f$.
Let us suppose that $b$ is the furthest from $f$ among the vertices from $B'$ on $P$.
Ignoring for the moment the $D$-tokens, we could split the move into two parts, the first one being a type (4-a) move between $f$ and $b$ and the other being sequence of (3-a) moves moving the $F$-token from $b$ to $v$.

However, there might be also some $D$-tokens picked up on the part of the path $P$ between $b$ and $v$. For that reason we now have to allow the (3-a) moves to also pick up $D$-tokens (as (3-b) and (4-a) do). Nevertheless, for a type (3-a) move to be able to pick up $D$-tokens, we must be sure that the involved $F$-token already took part in at least one flip, i.e., move of type (4-a). We call such $F$-tokens \emph{active}.
Then the $D$-tokens processed in the (3-a) move can take the path traversed by the $F$-token since the flip in reverse direction to merge with the $B$-token involved in the flip. This condition on the $F$-token being active would complicate the description of the game, since we would have to keep track of the $F$-tokens that are active. However, it is easy to check this condition on the description of the sequence of supermoves as guessed in the Chitnis et al. framework.

Hence, we can use just the moves (3-a), (3-b), and (4-a), where each of them can be equipped with a set of $D$-tokens to be processed within the move. These naturally correspond to single forward, single backward, and flipping moves of the Feldman and Ruhl game. 

We classify the moves the same way as Chitnis et al. into EmptyFlip, NonEmptyFlip, SingleForward, SingleForwardMeet, etc., with one additional parameter being the set of $D$-tokens processed. Note specifically, we call a flip empty even if it processes some $D$-tokens. We define the supermoves Alone, Absorb, AbsorbAndMeet, Meet, NonEmptyFlip, and MultipleFlip and the cleanness of MultipleFlip exactly the same way as Chitnis et al., again equipping each supermove with the set of the processed $D$-tokens.
Now since the set of moves we have is the same as the one of Feldman and Ruhl, the Theorem 5.1 of \cite{ChitnisHM14} applies to our game. That is, there is an optimal sequence of moves for our game that can be partitioned into $O(q)$ supermoves such that each MultipleFlip is clean. Moreover the sets of internal vertices of any two supermoves of this sequence have empty intersection (see Lemma 5.1 of \cite{ChitnisHM14}).

By Lemma 5.2 of \cite{ChitnisHM14}, there are $2^{O(q \log q)}$ unlabeled descriptions for the Feldman and Ruhl game, that is sequences of $O(q)$ supermoves with variables instead of the endpoints. In our case we further describe in which supermove we process which $D$-tokens. Since there are $O(q)$ supermoves and at most $|T|$ $D$-tokens, there are $(O(q))^{|T|}=2^{O(|T|\log q)}$ possible assignments of the $D$-tokens to the supermoves.
Hence, we have $2^{O(q \log q+|T|\log q)}$ unlabeled descriptions for our game in total.

We call an unlabeled description valid if it satisfies all the conditions of Definition 5.1 of \cite{ChitnisHM14} and moreover, for each forward supermove equipped with a non-empty set of $D$-tokens, the involved $F$-token is also involved in a flip supermove prior to this supermove. As in Chitnis et al., the validity of an unlabeled description can be checked in $O(q)$ time.

We associate with a valid unlabeled description $X$ the directed graph $D_X$ exactly the same way Chitnis et al. do. By Theorem 5.2 of \cite{ChitnisHM14}, if the unlabeled description $X^*$ is constructed from the optimal sequence of moves given above, then the underlying undirected graph of $D_{X^*}$ is a minor of the underlying undirected graph of the input digraph $G$. Since $D_X$ has $O(q)$ vertices, due to Demaine and Hajiaghayi~\cite{DemaineH05}, the treewidth of the underlying undirected graph of $D_{X^*}$ is $O(\sqrt{q})$. We can check for each valid unlabeled description $X$ whether the treewidth of the underlying undirected graph of $D_{X}$ is $O(\sqrt{q})$ in $2^{O(\sqrt{q})} \cdot q$ time using the 5-approximation algorithm of Bodlaender et al.~\cite{BodlaenderDDFLP13}.

Before we start guessing the labeling for the unlabeled description, we further guess for each $D$-token processed in a NonEmptyFlip, in which part of the sequence $\alpha \rightarrow \gamma_1 \rightarrow \gamma_2 \cdots \rightarrow \gamma_{\ell} \rightarrow \alpha'$ it is processed. Since there are $O(q)$ possibilities for each of the at most $|T|$ tokens, there are $2^{O(|T|\log q)}$ possibilities in total. We label each edge added to the $D_X$ for the NonEmptyFlip with the set of $D$-tokens processed in that part of the sequence. If we added a new vertex to $D_X$ for a MultipleFlip, then we label it with the the set of $D$-tokens processed in it, otherwise we label the edge added for that move. For all the other supermoves we label the edge added to $D_X$ for the supermove with the the set of $D$-tokens processed in it.

Before we introduce the functions $ce$ and $cv$, let us show how to compute a cost of a MultipleFlip$(f,b,u,v,u',v',D')$, where $D'$ is a set of $D$-tokens to be processed in it. We show that it can be computed similarly to the optimum value of $2$-RST-P instance. We construct the game graph of the accelerated game for the $2$-RST-P instance $(G,\{u,v\}, D')$. Observe that if the two tokens in this game are $f$ and $b$, then all the allowed moves in this accelerated game are (after the decomposition of the (4-b) moves) valid parts of a MultipleFlip. Since both $(\{u\},\{v\}, D')$ and $(\{u'\},\{v'\}, \emptyset)$ are valid configurations of this game, we can compute the cost of the supermove as the length of a shortest path between the two configurations. This takes $O(2^{4|T|}\cdot n^{O(1)})$ time (see Subsection~\ref{sec:algo}).

Now we are ready to define the cost functions $ce$ and $cv$ over the universe $U=V \cup V^4$. We define the function $cv$ exactly the same way as Chitnis et al., taking the appropriate cost of the MultipleFlip with the set $D'$ of processed $D$-tokens as assigned to the node of $D_X$ where relevant. For $ce$ we also take the new cost of MultipleFlip where relevant. Additionally, in the case where Chitnis et al. compute $ce(\alpha, \alpha',v,w)$ from the length of a shortest path from vertex $v$ to vertex $w$ in $G$ for some edge $(\alpha, \alpha')$ of $D_X$ equipped with set $D'$ we replace the formula by $ce(\alpha, \alpha',v,w) = ST(v,\{w\} \cup D')\cdot cv(\alpha,v)\cdot cv(\alpha',w)$, where $ST(x,Y)$ is as defined in Subsection~\ref{sec:accelerated}.

Now by result of Klein and Marx~\cite{KleinM12} we can find a mapping $\phi:V(D) \to U$ minimizing $B_{\phi}=\sum_{v \in V(D)} cv(v, \phi(v))+\sum_{\{u,v\} \in E(D)}ce(u,v,\phi(u),\phi(v))$ in $|U|^{O(tw(D))}=n^{O(\sqrt{q})}$ time.

Our algorithm guesses all possible unlabeled descriptions of appropriate length and for each of them checks its validity. For the valid ones it constructs the directed graph $D_X$ and checks the treewidth of its underlying undirected graph. If it is too large, the guess is discarded, otherwise we further guess the distribution of the $D$-tokens among the NonEmptyFlips. Then we compute the functions $cv$ and $ce$ and find the $\phi$ minimizing the $B_{\phi}$. We output the minimum over all $B_{\phi}$'s obtained this way.

Since there are $2^{O(q \log q+|T|\log q)}$ unlabeled descriptions and $2^{O(|T|\log q)}$ distributions of the $D$-tokens, there are $2^{O(q \log q+|T|\log q)}$ possible different guesses in total. Checking validity and constructing the digraph takes linear time, checking the treewidth takes $2^{O(\sqrt{q})}$ time. There are $O(n^4)$ vertices in the universe $U$ and $O(n^8)$ their pairs. For each of them, the computation of $cv$ or $ce$, respectively, takes at most $O(2^{4|T|}\cdot n^{O(1)})$ time. Finally, the execution of the algorithm of Klein and Marx takes $n^{O(\sqrt{q})}$ time.
Therefore, the algorithm runs in $2^{O(q \log q+|T|\log q)} \cdot n^{O(\sqrt{q})}$ time in total.

The correctness follows from the representability of the accelerated game and from Lemmata 6.2 and 6.3 of \cite{ChitnisHM14}. This finishes the proof of the theorem.

\section{Conclusion and Future Directions}\label{sec:concl}
We have shown that there is a nice special case of DSN that allows for as effective algorithms as were known for DST and SCSS, even with respect to planar graphs. We characterized that the crucial property of the solution to allow this is the existence of a vertex over which almost all paths required by the problem definition ``factorize''. An interesting open question is what is the complexity of $q$-RST (the unrestricted variant) in planar graphs.

Another interesting question is tied to the other parameterization of the problems. We are not aware of any result determining the complexity of SCSS in planar graphs with respect to the parameterization the number of nonterminals in the solution.

\bibliographystyle{splncs03}
\bibliography{roots}

\begin{thebibliography}{10}
\providecommand{\url}[1]{\texttt{#1}}
\providecommand{\urlprefix}{URL }

\bibitem{Berman13}
Berman, P., Bhattacharyya, A., Makarychev, K., Raskhodnikova, S., Yaroslavtsev,
  G.: Approximation algorithms for spanner problems and directed {S}teiner
  forest. Information and Computation  222,  93 -- 107 (2013)

\bibitem{BHKK07}
Bj\"{o}rklund, A., Husfeldt, T., Kaski, P., Koivisto, M.: Fourier meets
  {M\"{o}bius}: fast subset convolution. In: {STOC} 2007. pp. 67--74. ACM
  (2007)

\bibitem{BodlaenderDDFLP13}
Bodlaender, H.L., Drange, P.G., Dregi, M.S., Fomin, F.V., Lokshtanov, D.,
  Pilipczuk, M.: An $o(c^k n)$ 5-approximation algorithm for treewidth. In:
  FOCS 2013. pp. 499--508 (2013)

\bibitem{CCCD+99}
Charikar, M., Chekuri, C., Cheung, T.Y., Dai, Z., Goel, A., Guha, S., Li, M.:
  Approximation algorithms for directed {S}teiner problems. Journal of
  Algorithms  33(1),  73--91 (1999)

\bibitem{ChitnisHM14}
Chitnis, R., Hajiaghayi, M., Marx, D.: Tight bounds for planar strongly
  connected {S}teiner subgraph with fixed number of terminals (and extensions).
  In: SODA~2014. pp. 1782--1801. SIAM (2014)

\bibitem{CyganFKLMPPS15}
Cygan, M., Fomin, F.V., Kowalik, L., Lokshtanov, D., Marx, D., Pilipczuk, M.,
  Pilipczuk, M., Saurabh, S.: Parameterized Algorithms. Springer (2015)

\bibitem{DemaineH05}
Demaine, E.D., Hajiaghayi, M.: Graphs excluding a fixed minor have grids as
  large as treewidth, with combinatorial and algorithmic applications through
  bidimensionality. In: SODA~2005. pp. 682--689. SIAM (2005)

\bibitem{Dijkstra59}
Dijkstra, E.W.: A note on two problems in connexion with graphs. Numerische
  Mathematik  1(1),  269--271 (1959)

\bibitem{DowneyF13}
Downey, R.G., Fellows, M.R.: Fundamentals of Parameterized Complexity. Texts in
  Computer Science, Springer (2013)

\bibitem{DW72}
Dreyfus, S.E., Wagner, R.A.: The {Steiner} problem in graphs. Networks  1,
  195--207 (1972)

\bibitem{EricksonMV87}
Erickson, R.E., Monma, C.L., Veinott, Jr., A.F.: Send-and-split method for
  minimum-concave-cost network flows. Mathematics of Operations Research
  12(4),  634--664 (1987)

\bibitem{FeldmanR06}
Feldman, J., Ruhl, M.: The directed {S}teiner network problem is tractable for
  a constant number of terminals. SIAM Journal on Computing  36(2),  543--561
  (2006)

\bibitem{FredmanT87}
Fredman, M.L., Tarjan, R.E.: Fibonacci heaps and their uses in improved network
  optimization algorithms. J. ACM  34(3),  596--615 (1987)

\bibitem{FKMRRW07}
Fuchs, B., Kern, W., M{\"o}lle, D., Richter, S., Rossmanith, P., Wang, X.:
  Dynamic programming for minimum {Steiner} trees. Theory of Computing Systems
  41(3),  493--500 (2007)

\bibitem{GareyJ77}
Garey, M.R., Johnson, D.S.: The rectilinear {S}teiner tree problem is
  {NP}-complete. SIAM J. Appl. Math.  32(4),  826--834 (1977)

\bibitem{GareyJ79}
Garey, M.R., Johnson, D.S.: Computers and Intractability: A Guide to the Theory
  of NP-Completeness. Freeman (1979)

\bibitem{GuoNS11}
Guo, J., Niedermeier, R., Such{\'{y}}, O.: Parameterized complexity of
  arc-weighted directed {S}teiner problems. SIAM J. Discrete Math.  25(2),
  583--599 (2011)

\bibitem{Hakimi71}
Hakimi, S.L.: Steiner's problem in graphs and its implications. Networks  1,
  113--133 (1971)

\bibitem{HalperinKKSW07}
Halperin, E., Kortsarz, G., Krauthgamer, R., Srinivasan, A., Wang, N.:
  Integrality ratio for group {S}teiner trees and directed {S}teiner trees.
  SIAM J. Comput.  36(5),  1494--1511 (2007)

\bibitem{HK62}
Held, M., Karp, R.M.: A dynamic programming approach to sequencing problems.
  Journal of SIAM  10,  196--210 (1962)

\bibitem{ImpagliazzoP01}
Impagliazzo, R., Paturi, R.: On the complexity of k-{SAT}. J. Comput. Syst.
  Sci.  62(2),  367--375 (2001)

\bibitem{JonesLRSS13}
Jones, M., Lokshtanov, D., Ramanujan, M.S., Saurabh, S., Such{\'{y}}, O.:
  Parameterized complexity of directed {S}teiner tree on sparse graphs. In:
  {ESA} 2013. LNCS, vol. 8125, pp. 671--682. Springer (2013)

\bibitem{KleinM12}
Klein, P.N., Marx, D.: Solving planar $k$-terminal cut in $o(n^{c \sqrt{k}})$
  time. In: ICALP 2012. LNCS, vol. 7391, pp. 569--580. Springer (2012)

\bibitem{KN07}
Kortsarz, G., Nutov, Z.: Approximating minimum cost connectivity problems. In:
  Handbook of Approximation Algorithms and Metaheuristics, chap.~58. CRC (2007)

\bibitem{Levin71}
Levin, A.Y.: Algorithm for the shortest connection of a group of graph
  vertices. Sov. Math. Dokl.  12,  1477--1481 (1971)

\bibitem{Marx07}
Marx, D.: Can you beat treewidth? Theory of Computing  6(1),  85--112 (2010)

\bibitem{MisraPRSS12}
Misra, N., Philip, G., Raman, V., Saurabh, S., Sikdar, S.: {FPT} algorithms for
  connected feedback vertex set. J. Comb. Optim.  24(2),  131--146 (2012)

\bibitem{Nederlof13}
Nederlof, J.: Fast polynomial-space algorithms using inclusion-exclusion.
  Algorithmica  65(4),  868--884 (2013)

\bibitem{Pietrzak03}
Pietrzak, K.: On the parameterized complexity of the fixed alphabet shortest
  common supersequence and longest common subsequence problems. Journal of
  Computer and System Sciences  67(4),  757 -- 771 (2003)

\end{thebibliography}

\end{document}